\documentclass[12pt]{article}
\textheight 9.2in \textwidth 6.5in \baselineskip 12pt

\parindent 0.0truein
\parskip 12pt

\usepackage{ae} 
\usepackage[T1]{fontenc}
\usepackage{amssymb}
\usepackage{latexsym}
\usepackage{amsmath}
\usepackage[ansinew]{inputenc}
\usepackage{euscript}
\usepackage{epsfig}
\usepackage{multirow}
\usepackage{anysize}

\newcommand{\qed}{\hspace*{\fill}
            $\Box$\smallskip}
\newcommand{\thmqed}{\hspace*{\fill}
            $\blacksquare$\smallskip}

\newenvironment{proof}{\noindent {\bf Proof:} \par}
                      {\qed}


 \newtheorem{theorem}{Theorem}[section]
 \newtheorem{corollary}{Corollary}[section]
 \newtheorem{lemma}{Lemma}[section]
 \newtheorem{example}{Example}[section]

 \newtheorem{proposition}{Proposition}[section]
 \pagestyle{plain}


\begin{document}
\title{Greediness and Equilibrium in Congestion Games}
\author{Sergey Kuniavsky \footnote{Corresponding
author: Munich Graduate School of Economics, Kaulbachstr. 45, Munich 80539, Germany. Financial support from the Deutsche Forschungsgemeinschaft through GRK 801 is gratefully acknowledged. $<${\tt Sergey.Kuniavsky@lrz.uni-muenchen.de}$>$.} and Rann Smorodinsky \footnote{Faculty of Industrial Engineering and Management, Technion,
Haifa 32000, Israel. $<${\tt rann@ie.technion.ac.il $>$}}}

\maketitle

\begin{abstract}
Rosenthal (1973) introduced the class of congestion games and proved that they always possess a Nash equilibrium in pure strategies. Fotakis et al. (2005) introduce the notion of a greedy strategy tuple, where players sequentially and irrevocably choose a strategy that is a best response to the choice of strategies by former players. Whereas the former solution concept is driven by strong assumptions on the rationality of the players and the common knowledge thereof, the latter assumes very little rationality on the players' behavior. From Fotakis \cite{fotakis10} it follows that for Tree Representable congestion Games greedy behavior leads to a NE. In this paper we obtain necessary and sufficient conditions for the equivalence of these two solution concepts. Such equivalence enhances the viability of these concepts as realistic outcomes of the environment.  The conditions for such equivalence to emerge for monotone symmetric games is that the strategy set has a tree-form, or equivalently is a `extension-parallel graph'. 
\end{abstract}

\section{Introduction}
Congestion games form a natural class of games that are useful in modeling many realistic settings, such as traffic and communication networks, routing, load balancing and more.

A symmetric congestion game is a 4-tuple $(N,R,\Sigma, \{P_r\}_{r\in R})$,  where $N$ is a finite set of players,
$R$ is a finite set of resources, $\Sigma \subset 2^R$ is the set of players' strategies, and for any $r \in R$, $P_r: N \to \mathbb{R}$ is the resource's payoff function. A strategy of a player is a choice of a subset resources, $s \in \Sigma$.
For any strategy tuple $s = (s^i)_{i \in N} \in \Sigma^N$ let $c(s)_r  = |\{i \in N : r \in s^i\}| $ denote the number of players that utilize $r$ (a.k.a. the congestion of the resource $r$) and denote by $c(s) = (c(s)_1, \ldots c(s)_r)$ the congestion vector. The utility of a player is the total payment for the resources she utilizes.
Formally,  $U^i(s) = \Sigma_{r \in s^i} P_r(c(s)_r).$

A congestion game is {\em monotone} if for for any $1\leq k < l \leq N$ and $r \in R$, $P_r(k) > P_r(l)$.
Monotone congestion games widely prevail in modeling traffic and communication problems, production resource allocation and more. In {\em single-signed} congestion games payments are either all positive or all negative.
Typically, whenever monotone congestion games are used for modeling, they are assumed single-signed.

A {\em congestion game form} is a pair $F=\{R , \Sigma\}$, composed of the set of resources and a set of strategies (subsets of $R$). For any congestion game $G =(N,R,\Sigma, \{P_r\}_{r\in R})$ let $F(G)=(R,\Sigma)$ denote the corresponding game form. Given a congestion game form $F$, let ${\cal G}(F) = \{G: (F(G)=F) \wedge (G$ monotone$) \}$ denote the class of all monotone congestion games with the game form $F$.

We say that a strategy set, $\Sigma \subset R$, is {\em subset-free} if for any $s \not =t \in \Sigma$ we have $s \not \subset t$ and $s \not \subset t$. Thus, a {\em a subset-free Congestion Game (Form)} is a Congestion Game (Form) with a subset-free strategy space. For any equilibrium analysis of single-signed monotone congestion games the assumption of subset free strategy sets is without loss of generality. In particular, note that in such games for any pair of strategies $s \subset t$ in $\Sigma$ either $s$ is dominated by $t$ (in case resource payments are all positive) or $t$ is dominated by $s$ (in case resource payments are all negative) and so after deletion of dominated strategies we are left with subset free sets.

As usual, a profile $s \in \Sigma^N$ is a pure NE of $G$, if for each player $i$, for each strategy $t^i \in \Sigma$,
$U^i(s^i,s^{-i}) \geq U^i(t^i,s^{-i})$, where $s^{-i}$ is the vector of strategies of all players but $i$.
Informally, a set of strategies is a Nash equilibrium if no player can do better by unilaterally deviating. The set of all pure NE of a congestion game $G$ will be denoted $NE(G)$.

\subsection{Known Results}

Congestion games were introduced by Rosenthal (1973) \cite{Rosental} , who proved that any congestion game has a Nash equilibrium in pure strategies. In spite of this fact there is still valid concern about the prevalence of a Nash equilibrium in reality. There are two classical criticisms over the validity of a Nash equilibrium profile as a {\it solution concept} which can be made - one that is computationally driven and another that is rationality driven.

The latter criticism is based on the fact that for an equilibrium to prevail players must have common knowledge of rationality, a condition typically unrealistic. The former criticism argues that the existence of a pure Nash equilibrium  does not imply it is computationally simple to find such an equilibrium. In particular, whenever the strategy space is rich this may be a challenging endeavor. In general, this may require searching over all strategy tuples, whose number can grow exponentially with the number of players. However, as congestion games have the {\em finite improvement property} one could suspect that it may be easier to find such an equilibrium.%
 \footnote{The finite improvement property asserts that if players sequentially improve their utility by unilateral strategy changes then this process is finite and must end in a Nash equilibrium profile, see, Monderer and Shapley \cite{MondSh}.}
 However, it turns out that improvement paths can be exponentially long, as demonstrated by Ieong et al.  \cite{Ieong} and Fabricant et al. \cite{Fabrikant et al}. %
Recently Fabrikant et al. \cite{Fabrikant et al} provide an algorithm that finds a Nash equilibrium in polynomial time, for an important subset of congestion games. This is done via a reduction to a flow problem, yet leaves little insight regarding the nature of the NE.


Fotakis et al. (2005) \cite{Fotakis05} introduce the notion of a {\em greedy strategy profile}. Let us consider a dynamic setting with the players joining the game sequentially. Each player, upon arrival, irrevocably
chooses a best response strategy to the choice of strategies of the previous players,
while ignoring subsequent players. The resulting strategy profile is called a {\em greedy strategy profile}. Let us denote by $Z(G)$ the set of all greedy strategy profiles. Note the two degrees of freedom in the process - the order of the agents and the tie-breaking rule in case of indifference among several options.

Formally, $s \in Z(G)$, if there exists a permutation $\pi:N \to N$
of the players ($\pi(i)$ denotes the order of $i$) such that for any
player $i$ who chooses strategy $s^i$ we have $\sum_{r \in s^i}P_r(c(s^i_\pi)_r+1) \geq \sum_{r \in t}P_r(c(s^i_\pi)_r+1)$
$\ \forall t \in \Sigma$,
where $c(s^i_{\pi})_r = |\{j: r \in s^{\pi(j)}, \pi(j)<\pi(i)\}|$
is the number players preceding $i$ according to the permutation $\pi$  whose strategy includes resource $r$. Clearly $Z(G) \not = \emptyset$, and typically $Z(G)$ may contain many such profiles.

In contrast with the rationality assumption underlying the notion of Nash equilibrium, the rationality requirement from a greedy profile is very low, possibly too low, as players clearly choose to ignore anything they do not observe.
In addition, calculating a greedy equilibrium profile is a much less demanding task than calculating a Nash equilibrium. Hypothetically, whenever $NE(G)=Z(G)$ the prevalence of an equilibrium is much more likely. This is because the identity of the two sets suggests that the rationality assumption underlying an equilibrium profile is weak and the complexity of finding such an equilibrium is typically linear with respect to the number of players. This motivates us to study the relationship between the sets $NE(G)$ and $Z(G)$.

Fotakis et al. \cite{Fotakis05} have already shown that  $Z(G) \subset NE(G)$ for simple congestion games. Fotakis \cite{fotakis10} showed that if a class of congestion games that satisfy two conditions: (1) the game form is that of `extension-parallel graph', namely one can map the resources to the set of edges in a  extension-parallel graph and the strategies are the set of paths leading from a certain node in the graph (designated as the source node) to another node (designated as the target node); and (2) the resource payoff functions satisfy a property referred to as the `Common Best Reply' requirement, met in symmetric congestion game. In particular, their result implies that  $Z(G) \subset NE(G)$ for simple congestion games, where strategies are the singleton resource subsets of $R$.


Additional papers that study conditions under which $Z(G) \subset NE(G)$ are Ackerman et al. \cite{Ackerman et al} and Fotakis \cite{fotakis10}. In Ackerman et al. \cite{Ackerman et al} the main observation is that greedy best responses converge very fast to a NE, when the strategy structure is that of a Matroid, while in
Fotakis \cite{fotakis10} shows a theorem from which follows that in Tree Representable Congestion Games greedy leads to NE.

\subsection{Our Contribution}

This paper characterizes the setting for which $Z(G)$ and $NE(G)$ coincide. In particular, our main result argues that a necessary and sufficient conditions for these two solution concepts to coincide is that the the game form is that of `extension-parallel graph'. These results extend the state of the art knowledge in two ways. First, it is shown that for such game forms not only is every greedy profile a Nash equilibrium but also vice versa. In addition, we show that for such equivalence to hold for a given game form it must be the4 case that the game form is of a certain class, namely a `extension-parallel graph'. In particular, given a game form not satisfying this condition, we show how to construct  resource payoff functions such that the the set of NE profiles and greedy profiles will not coincide.

The `extension-parallel graph' game form is also the necessary and sufficient condition for the set of NE profiles to coincide with the set of strong equilibrium profiles, as shown by Holzman and Law Yone \cite{Holz} and \cite{holz03}. Note that Holzman and Law Yone \cite{holz03} refer `tree representable' game form which, a-priori, are different than `extension-parallel graph' game forms, but they go on and prove equivalence (Theorem 1 and Theorem 2).

Combining the results in \cite{Holz} and \cite{holz03} and our contribution, we obtain equivalence for 'extension-parallel graph' game forms (or Tree Representable game forms) between greedy profiles and strong NE. Moreover, if the game is not form is not 'extension parallel graph' it is possible to find payoffs where the equivalence will not hold.

The structure of the article is as follows: Section 2 provides a variety of examples that demonstrate that without any restrictions on the game form there is no connection between the sets $NE(G)$ and $Z(G)$. Section 3 formalizes the notion of tree representable games, and discusses the characteristics of this class. Then in section 4 we present and prove the main result, namely equivalence between $NE(G)$ and $Z(G)$ for tree representable congestion games.

\section{Examples}

Here we provide several examples for the various relations between $Z(G)$ and $NE(G)$. As we shall demonstrate those can differ depending on the game in question.

\begin{example}\label{3choose2}
$NE(G) \cap Z(G) = \emptyset$ - Greedy profiles and equilibria are mutually exclusive.

In this example there are 3 players and 3 resources. The strategy space is the set of all pairs of resources.

\begin{tabular}{|c|c|c|c|}
\hline
\# of players / Resource & A & B
& C \\
\hline
1 & 10 & 10 & 8 \\
\hline
2 & 8 & 4 & 6 \\
\hline
3 & 1 & 1 & 5 \\
\hline
\end{tabular}\\

The unique greedy profile (up to renaming of players) is $(AB,AC, BC)$. Note this is not a Nash equilibrium since player 1 can profitably deviate from AB to AC, increasing her utility from 8+4 to 8+5. On the other hand the unique Nash equilibrium (up to renaming of players) is  $AC, AC, BC$, and obtained after this deviation.
\end{example}

\begin{example}\label{2layer}
$Z(G) \subsetneq NE(G)$  - Greedy profiles strictly contained in NE profiles.

In this example there are 2 players and 4 resource. Each strategy must contain one of the resources A,B and one of the resources C,D.

\begin{tabular}{|c|c|c|c|c|}
\hline
\# of players / Resource & A & B & C & D\\
\hline
1  & 40 & 30 & 20 & 15 \\
\hline
2  & 10 & 11 & 12 & 13 \\
\hline
\end{tabular}\\

Clearly the unique greedy profile is $(AC,BD)$, which is also a Nash equilibrium. However, there is an additional Nash equilibrium profile $(AD,BC)$.
\end{example}

\begin{example}\label{GreedySuperset}
$NE(G) \subsetneq Z(G)$ - The greedy profiles strictly contain the NE.

In this example there are 2 players and 3 resources. The strategy space is the set of all pairs of resources.

\begin{tabular}{|c|c|c|c|}
\hline
\# of players / Resource & A & B & C \\
\hline
1 & 10 & 8 & 8 \\
\hline
2 & 1 & 7 & 6 \\
\hline
\end{tabular}\\

The greedy profiles are $AB, BC$ and $AC, BC$. The first one is the unique pure NE of the game.
\end{example}

\begin{example}
$Z(G) \cap  NE(G) \not = \emptyset$, $Z(G) \setminus  NE(G)  \not = \emptyset$ and $NE(G) \setminus  Z(G)  \not = \emptyset$.

Consider a game with 2 players, 5 resources: A,B,C,D,E and the strategy space $\Sigma=\{ AB, AC, DB, E\}$.

\begin{tabular}{|c|c|c|c|c|c|}
\hline
\# of players / Resource & A & B & C & D & E\\
\hline
1 & -1 & -1 & -5 & -2 & -10\\
\hline
2 & -5 & -10 & -100 & -100 & -100 \\
\hline
\end{tabular}\\

If we express this game in the standard bi-matrix form we get:

\begin{tabular}{|c|c|c|c|c|}
\hline
Game & AB & AC & DB & E \\
\hline
AB & -15, -15 & -6, \textbf{-10} & -11, -12 & \textbf{-2}, \textbf{-10} \\
\hline
AC & \textbf{-10}, -6 & -105, -105 & \textbf{-6}, \textbf{-3} & -6, -10 \\
\hline
DB & -12, -11 & \textbf{-3}, \textbf{-6} & -110, -110 & -3, -10 \\
\hline
E & \textbf{-10}, \textbf{-2} & -10, -6 & -10, -3 & -100, -100 \\
\hline
\end{tabular}\\

It is now easy to verify that the only two Nash equilibria of this game are $(AB,E)$ and $(AC,DB)$. The greedy profiles, on the other hand, are  $(AB, E)$ and $(AB,AC)$.
\end{example}

\begin{example}
$NE(G) = Z(G)$ - Equivalence

This holds in any simple congestion game (follows from our main result).
\end{example}

\section{Tree Representable Congestion Games}

Holzman and Law Yone (1996) \cite{Holz} study a class of game forms called \textit{Tree Representable Congestion Games} (TRCG).
To define this they introduce the notion of an {\em $R$-tree}. An $R$-tree is a tree whose nodes,
except for the root, are labeled by elements in $R$, each appearing at most once
(however not necessarily all elements in $R$ are mapped to the nodes). With each terminal node in the tree we can associate the set of resources that form the unique path leading from the root to that node. Thus, the $R$-tree induces a set of strategies.

The game form $(R,\Sigma)$ is {\em Tree Representable} if there exists an $R$-tree which induces the set $\Sigma$. A congestion game $G$ is tree representable iff its corresponding game form is tree representable.%
\footnote{
Actually, Holzman and Law Yone (1996) \cite{Holz} allow for strategies that are induced by paths that do not necessarily lead to terminal nodes. However, whenever the game is subset-free this cannot occur and hence our focus is only on terminal nodes. In \cite{holz03} they introduce the notion of subset free and adjust the tree representable definition.
}

\begin{example}
Consider a game form $(R,\Sigma)$ with $R=\{ A, B, C, D, E, F, G, H, I, J, K, L \}$ and
$\Sigma=\{ ABG; AH; CI; CFJ; DEK; DEL \}$. This game form is tree representable and the corresponding tree is depicted in Figure \ref{RTree}, where the set of terminal nodes is $\{ G, H, I, J, K, L \}$. Note that the tree in the figure is not unique. For example, we can exchange nodes $F$ and $J$ and still represent the same game.
\end{example}

\begin{figure}
 \centering
    \includegraphics[height=80mm]{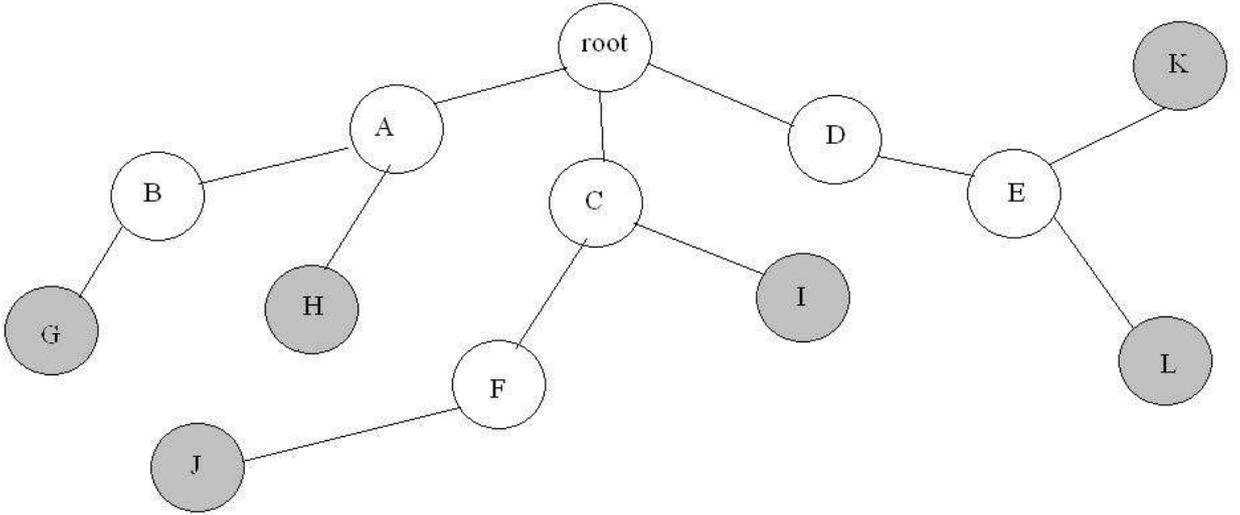}
  \caption{Example of an R-Tree}
  \label{RTree}
\end{figure}

\subsection{When is a game form Tree Representable?}

A \textit{Bad Configuration} is a combination of 2 resources $A,B$ and 3 strategies $s_1,s_2,s_3$ such that the following three conditions are satisfied:
\begin{enumerate}
\item $A,B \in s_1$
\item $A \in s_2 \setminus s_3$
\item $B \in s_3 \setminus s_2$
\end{enumerate}


Note that in subset-free congestion games there must be an additional resource $z \in s_2 \setminus s_1$, as $s_2$ is not a subset of $s_1$. Similarly, there exists a resource $w \in s_3 \setminus s_1$ (possibly $w=z$).
Verifying whether a game form is tree representable is possible in polynomial time due to the following result of Holzman and Law Yone (1996) \cite{Holz}:

\begin{theorem}\label{BadConfThm}
A congestion game form does not have a bad configuration if and only if it is tree representable.
\end{theorem}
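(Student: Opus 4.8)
The plan is to prove the two implications separately; the direction ``no bad configuration $\Rightarrow$ tree representable'' carries essentially all of the work, while the converse is a short argument about ancestors in a tree. Throughout I restrict attention to resources that occur in at least one strategy, and for each such $r$ I write $S(r)=\{s\in\Sigma : r\in s\}$ for the family of strategies using $r$.

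For the easy direction I would assume $(R,\Sigma)$ is induced by an $R$-tree $T$ and derive a contradiction from a hypothetical bad configuration $(A,B;s_1,s_2,s_3)$. Since $A,B\in s_1$, the two labelled nodes both lie on the single root-to-terminal path that represents $s_1$, so they are comparable in the ancestor order of $T$. Because each label occurs at most once, there is a \emph{unique} node carrying a given label; so if (say) $A$ is an ancestor of $B$ then the root path of $B$ is forced through $A$, whence $B\in s\Rightarrow A\in s$ for every $s\in\Sigma$. This contradicts $B\in s_3$, $A\notin s_3$. The symmetric case, $B$ an ancestor of $A$, contradicts $A\in s_2$, $B\notin s_2$. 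Hence no bad configuration can exist.

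For the main direction I would first extract the combinatorial heart of the hypothesis as a \emph{laminarity} statement: the family $\{S(r)\}_r$ is laminar, i.e.\ any two members are nested or disjoint. Indeed, if $A$ and $B$ never co-occur then $S(A)\cap S(B)=\emptyset$; and if they co-occur in some $s_1$ while $S(A),S(B)$ are \emph{not} nested, then picking $s_2\in S(A)\setminus S(B)$ and $s_3\in S(B)\setminus S(A)$ produces precisely a bad configuration $(A,B;s_1,s_2,s_3)$, which is excluded. Given laminarity I would build the $R$-tree by ordering resources under reverse inclusion of their $S$-sets: for a fixed $B$, the sets that strictly contain $S(B)$ all contain the nonempty set $S(B)$, hence pairwise intersect and, by laminarity, form a chain, so $B$ has a well-defined parent---the resource realizing the minimal such strict superset (ties, where $S(A)=S(B)$, broken by an arbitrary fixed linear order and placed consecutively). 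Attaching the maximal resources to an artificial root yields a genuine $R$-tree, since every resource is a single node and labels are distinct.

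The remaining and most delicate step is to verify that this tree induces exactly $\Sigma$, and this is where subset-freeness is essential. For $s\in\Sigma$ the $S$-sets of its resources all contain $s$, hence form a chain; two consecutive resources in this chain must be parent and child, because any resource whose $S$-set lay strictly between them would also contain $s$ and therefore belong to $s$, contradicting consecutiveness. Thus $s$ is realized by a root-to-node path, whose endpoint $r_k$ must be a \emph{leaf}: a child $r'$ of $r_k$ would satisfy $S(r')\subsetneq S(r_k)$, and any $s'\in S(r')$ would then strictly contain $s$, violating subset-freeness. Conversely, along any root-to-leaf path the $S$-sets weakly decrease, so any strategy through the leaf contains every label on the path, and the same minimality/leaf argument forces equality; hence every root-to-leaf path is a strategy in $\Sigma$. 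I expect the laminarity lemma to be clean, and the genuine obstacle to be this last bookkeeping---pinning down the parent relation, handling equal $S$-sets, and invoking subset-freeness to guarantee that strategies terminate at leaves rather than at internal nodes.
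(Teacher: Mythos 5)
Your proof is correct; note first that the paper does not actually prove Theorem \ref{BadConfThm} --- it imports the result from Holzman and Law Yone \cite{Holz} --- so there is no in-paper argument to compare against, and what you have written is a self-contained proof of a statement the paper treats as a black box. Your easy direction is the standard one: in an $R$-tree two co-occurring labels are comparable in the ancestor order, so one of them lies in every strategy containing the other, which is incompatible with a bad configuration. For the hard direction, your key lemma --- that absence of bad configurations is exactly laminarity of the family $S(r)=\{s\in\Sigma: r\in s\}$ --- is the right combinatorial reformulation, and the tree built by reverse inclusion of the $S$-sets (with equal $S$-sets strung consecutively in a fixed order) does induce exactly $\Sigma$: the two consecutiveness claims one has to check (no $S$-set lies strictly between those of two consecutive resources of a strategy $s$, since such a resource would belong to $s$; and the minimal strict superset of $S(r_{i+1})$ among resources of $s$ is $S(r_i)$) both go through, as does the converse that every root-to-leaf path is a strategy. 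You are also right to flag subset-freeness as the essential ingredient for terminating at leaves: under the paper's terminal-node definition of tree representability the theorem is false without it (e.g. $\Sigma=\{\{A\},\{A,B\}\}$ has no bad configuration yet is not representable by terminal paths), which is exactly the issue the paper's footnote on Holzman and Law Yone's more permissive definition addresses. One tiny gap in wording: a child $r'$ of the last resource $r_k$ of $s$ satisfies $S(r')\subseteq S(r_k)$ a priori rather than $\subsetneq$, but either way any $s'\in S(r')$ contains all of $s$ together with $r'\notin s$, so the subset-freeness contradiction is unaffected (and in fact $S(r')=S(r_k)$ would force $r'\in s$, which is impossible).
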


Thus, verifying whether a game form is tree representable can be done by going over all possible pairs of resources and triplets of strategies, which is polynomial in both factors. Note that if the game is tree representable, each strategy must have at least one unique resource (the last resource on the tree path), implying that the number of resources is equal to or larger than the number of strategies.

\section{Main Result}



Our main result links tree representable congestion games with the equivalence of the two solution concepts based on  greediness and equilibrium:

\begin{theorem}\label{main theorem 1}
Let $F$ be a subset free Congestion Game Form. $F$ is tree representable iff for any congestion game  $G \in {\cal G}(F)$, we have that $Z(G) = NE(G)$.
\end{theorem}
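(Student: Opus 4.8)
The statement is a biconditional, so the plan is to treat its two directions separately. The forward direction asserts that tree representability forces $Z(G)=NE(G)$ for every monotone $G\in\mathcal G(F)$, and I would split this equality into two inclusions. The inclusion $Z(G)\subseteq NE(G)$ is already available: a tree representable form is an extension-parallel graph form, and symmetric monotone games satisfy the Common Best Reply requirement, so this inclusion follows directly from Fotakis \cite{fotakis10} (it also subsumes the simple-game case of \cite{Fotakis05}). The genuinely new content is the reverse inclusion $NE(G)\subseteq Z(G)$, that every equilibrium is reachable by some order of arrival together with some tie-breaking rule. For the reverse direction of the theorem I would argue the contrapositive: if $F$ is not tree representable I must exhibit a single monotone $G\in\mathcal G(F)$ with $Z(G)\neq NE(G)$.

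To prove $NE(G)\subseteq Z(G)$ I would induct on the recursive construction of the extension-parallel graph underlying $F$. The base case is a single edge (one strategy), where every player plays it and greedy trivially equals NE. There are two composition steps. In the extension (series) step the added component is a single edge, so its resource $R_e$ lies on every source-sink path; since all $n$ players use $R_e$ its congestion is a fixed additive constant for every strategy at every stage of play, hence both the NE best responses and the greedy best responses are unchanged by deleting $R_e$, and the claim reduces to the smaller form by induction. The substantive case is parallel composition $F=F_1\,\|\,F_2$, where the resource sets are disjoint and each strategy lies wholly in $F_1$ or wholly in $F_2$. Fixing a NE $s$, let $(N_1,N_2)$ be the induced split of players by side; the restriction of $s$ to $N_j$ is a NE of the $|N_j|$-player subgame on $F_j$, because the within-side deviations form a subset of the deviations available in $G$, so by induction each side-restriction is reproduced by a greedy order $\pi_j$ on $F_j$.

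The crux is then to interleave $\pi_1$ and $\pi_2$ into one global arrival order under which each player's greedy best response is its NE side and path. I would encode the relevant data by the marginal-value functions $g_j(k)$, defined as the value obtained by the $(k+1)$-st greedy arrival on side $j$ (the best-response value against the $k$-player greedy, hence NE, configuration there); monotonicity makes each $g_j$ strictly decreasing. A global arrival order that always sends the next player to the side with the larger current marginal value produces a nondecreasing pair of side-counts, and I would show that the terminal splits $(a,b)$ with $a+b=n$ reachable in this way are characterized by exactly the inequalities $g_1(a-1)\ge g_2(b)$ and $g_2(b-1)\ge g_1(a)$, which are precisely the conditions for $(a,b)$ to be a NE split. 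Since the prefixes of $\pi_j$ realize the intermediate $k$-player configurations on side $j$, interleaving according to these marginal values and breaking ties toward the target split reproduces $s$ greedily. The main obstacle I anticipate is making this interleaving argument fully rigorous: one must verify that the intermediate side-configurations arising during the interleaving coincide with the prefixes required by $\pi_1,\pi_2$, and that tie-breaking can always be steered to the specified NE split, so that no arriving player is tempted to cross to the other side.

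For the reverse direction I would start from Theorem \ref{BadConfThm}: if $F$ is not tree representable it contains a bad configuration, resources $A,B$ and strategies $s_1,s_2,s_3$ with $A,B\in s_1$, $A\in s_2\setminus s_3$ and $B\in s_3\setminus s_2$, together with the auxiliary resources $z\in s_2\setminus s_1$ and $w\in s_3\setminus s_1$ guaranteed by subset-freeness. I would then generalize Example \ref{3choose2}: assign $A$ and $B$ payoffs that are very attractive at congestion $1$ but collapse at congestion $2$, so that a lone first mover greedily selects $s_1$, seizing both $A$ and $B$, while assigning $z,w$ benign intermediate payoffs and giving every resource appearing only outside $\{s_1,s_2,s_3\}$ payoffs so negative that no strategy other than $s_1,s_2,s_3$ is ever a best response. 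With the payoffs calibrated as in Example \ref{3choose2}, the unique greedy profile places a player on $s_1$ and fails to be an equilibrium, since that player profitably switches once the others crowd $A$ and $B$, whereas the equilibrium uses a different profile, giving $Z(G)\neq NE(G)$. The care needed here is to choose the payoffs on $z,w$ and on the extraneous resources so that monotonicity is preserved and the embedded three-strategy subgame behaves exactly like Example \ref{3choose2} without interference from the surrounding form.
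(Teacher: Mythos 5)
Your forward direction takes a genuinely different route from the paper for the inclusion $NE(G)\subseteq Z(G)$: you induct on the series/parallel construction of the extension-parallel network and reduce everything to an interleaving argument at parallel compositions, whereas the paper (Lemma \ref{AllNEGreedy}) inducts on the number of players, removes a player of minimal utility, and shows via a three-case analysis of how the paths $s^j$, $s^n$, $\sigma$ branch in the $R$-tree that the projection of the equilibrium onto the remaining $n-1$ players is still a Nash equilibrium; the removed player is then appended last in the greedy order. The paper's scheme buys a clean inductive step with no need to compare subgames on different components; yours would buy a more structural picture, but the crux you yourself flag is a real gap, not a formality: you need that the minimum equilibrium utility on side $j$ with $k$ players equals the marginal value $g_j(k-1)$ of the $k$-th greedy arrival, and that the intermediate configurations produced by the interleaving coincide with the greedy prefixes $\pi_j$ regardless of how earlier ties were broken. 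Both facts require essentially the uniqueness of the equilibrium congestion vector on extension-parallel components, which you neither state nor prove. The inclusion $Z(G)\subseteq NE(G)$ via Fotakis is the same as the paper's Lemma \ref{fotakis_addon}.

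The necessity direction has a concrete gap. First, your plan to neutralize all strategies other than $s_1,s_2,s_3$ by making resources outside $s_1\cup s_2\cup s_3$ very negative fails: subset-freeness only guarantees that no other strategy is \emph{contained in} one of $s_1,s_2,s_3$, but a fourth strategy may well be contained in the union $s_1\cup s_2\cup s_3$ (e.g.\ in $s_2\cup(s_3\setminus s_1)$) and then carries no penalized resource, so it can disrupt the intended greedy dynamics. This is precisely the obstruction the paper must confront in Lemma \ref{exists s4} and Corollary \ref{4strCor}. Second, a single template modeled on Example \ref{3choose2} cannot cover all bad configurations: that example's deviation hinges on a resource shared by the two non-$s_1$ strategies but absent from $s_1$ (the resource $C$ there, lying in $s_2\cap s_3\setminus s_1$), and the paper's Lemma \ref{no_resources} shows that exactly in the case where such a resource exists one gets a counterexample of the ``unreachable equilibrium'' type --- while in the complementary case $(s_2\cap s_3)\setminus s_1=\emptyset$ no such shared resource is available and the paper must instead produce counterexamples of two other flavors (a greedy profile that is not an equilibrium as in Lemma \ref{exists s4}, or a game in the spirit of Example \ref{2layer} as in Proposition \ref{sufficient}), ultimately closing the argument by deriving $s^4\subseteq s^3$ and contradicting subset-freeness. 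Without this case analysis your construction simply does not apply to every non-tree-representable form, so the contrapositive is not established.
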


Recall that the set of strategies that survives deletion of dominated strategies in single-signed monotone congestion games is subset-free. Thus, the following conclusion immediately follows from Theorem \ref{main theorem 1}:

\begin{corollary}\label{main theorem 1 for single signed}
Let $F$ be a Congestion Game Form. $F$ is tree representable iff for any congestion game  $G \in {\hat{\cal G}}(F)$, we have that $Z(G) = NE(G)$,
\end{corollary}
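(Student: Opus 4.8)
The plan is to first reduce the corollary to Theorem~\ref{main theorem 1} and then prove the theorem, which carries all the content. For a single-signed monotone game the text observes that iterated deletion of dominated strategies leaves a subset-free strategy set; since a strictly dominated strategy is never chosen, either in a Nash equilibrium or as a greedy best response, this deletion changes neither $NE(G)$ nor $Z(G)$, and (in the subset-free sense of the footnote) it leaves the tree-representability status of the form unchanged. Hence the corollary is exactly Theorem~\ref{main theorem 1} applied to the reduced subset-free form, and I focus on the theorem. Being an equivalence, it splits into \emph{sufficiency} (tree representable $\Rightarrow$ $Z(G)=NE(G)$ for every $G\in\mathcal{G}(F)$) and \emph{necessity} (not tree representable $\Rightarrow$ some $G\in\mathcal{G}(F)$ has $Z(G)\neq NE(G)$).

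For sufficiency, fix a tree representation of $F$, so strategies are root-to-leaf paths and each resource labels a unique node. The inclusion $Z(G)\subseteq NE(G)$ is already available from \cite{fotakis10}, so the new content is $NE(G)\subseteq Z(G)$. The first step is the structural observation that, in any profile on a tree, congestion is non-increasing along every root-to-leaf path, because the congestion of a node equals the number of players whose leaf lies in its subtree and subtrees only shrink as one descends. Using this I would prove the key lemma: every Nash equilibrium contains at least one player whose strategy is a best response to the empty profile, i.e.\ a \emph{first-best} path maximizing $\sum_{r\in t}P_r(1)$. To see this, walk down a first-best path $p^\ast$; by monotonicity the resources it shares with occupied paths form a prefix, so a player sitting at the deepest occupied node of $p^\ast$ can deviate to complete $p^\ast$ along currently-empty resources, and the equilibrium inequality for that deviation forces her own path to be first-best as well. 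Given the lemma I would induct on the number of players: remove a first-best player $i^\ast$ and pass to the game $G''\in\mathcal{G}(F)$ obtained by raising the congestion argument of each resource on $s^{i^\ast}$ by one (still strictly monotone, same form). The restricted profile $s_{-i^\ast}$ is a Nash equilibrium of $G''$, hence greedily realizable there by induction; prepending $i^\ast$, who arrives first and is by construction a best response to the empty profile, yields a greedy realization of $s$ in $G$, since the baseline created by $i^\ast$ reproduces exactly the shifted payoffs of $G''$ for every later arrival.

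For necessity I would start from Theorem~\ref{BadConfThm}: if $F$ is not tree representable it contains a bad configuration $s_1,s_2,s_3$ with $A,B\in s_1$, $A\in s_2\setminus s_3$, $B\in s_3\setminus s_2$, and, by subset-freeness, private resources $z\in s_2\setminus s_1$ and $w\in s_3\setminus s_1$. The plan is to turn this into a three-player instance reproducing Example~\ref{3choose2}: assign $A,B$ payoffs that are very attractive at congestion $1$ but crash at higher congestion, give $z,w$ intermediate payoffs, give every other resource of $s_1\cup s_2\cup s_3$ a negligible strictly-decreasing payoff, and give every resource outside $s_1\cup s_2\cup s_3$ a hugely negative payoff at all congestion levels. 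Then greedy fills in $(s_1,s_2,s_3)$ in turn, yet the first player, locked onto both $A$ and $B$, can profitably move to $s_2$ once $A$ and $B$ are congested, so this greedy profile is not an equilibrium and $Z(G)\neq NE(G)$.

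The main obstacle I anticipate is in this necessity direction: I must guarantee that no strategy other than $s_1,s_2,s_3$ ever interferes, neither as a greedy best response nor as a candidate profitable deviation when I verify that the greedy profile fails the equilibrium condition. Resources outside $s_1\cup s_2\cup s_3$ are easily neutralized by huge penalties, but a strategy that is a subset of $s_1\cup s_2\cup s_3$, using $A$ or $B$ together with only negligible resources, cannot be penalized this way, since $s_1,s_2,s_3$ themselves may carry those negligible resources. Controlling these residual strategies while keeping all payments strictly monotone, and choosing the number of players large enough that the crash in $A,B$ actually bites, is the delicate part of the argument; the sufficiency direction, by contrast, is essentially settled once the first-best lemma is in hand.
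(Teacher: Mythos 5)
Your reduction of the corollary to Theorem \ref{main theorem 1} (delete dominated strategies, observe this changes neither $NE(G)$ nor $Z(G)$ and lands in the subset-free case) is exactly the paper's. Your sufficiency argument for the theorem is a genuinely different and, as far as I can check, sound route: the paper proves $NE(G)\subseteq Z(G)$ by removing the \emph{lowest-utility} player, showing via a three-way case analysis on how the paths $s^j,s^n,\sigma$ branch that the projection is still an equilibrium of the same $(n-1)$-player game, and placing that player \emph{last}; you remove a \emph{first-best} player, place her \emph{first}, and recurse on a congestion-shifted game $G''\in{\cal G}(F)$. Your first-best lemma (path intersections in a tree are prefixes, so the player at the deepest occupied node of a first-best path is herself first-best) is correct and arguably cleaner than the paper's case analysis, at the price of having to carry the modified game $G''$ through the induction.

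The genuine gap is the necessity direction, and it is exactly the part you flag and then leave open. Exhibiting a bad configuration $s_1,s_2,s_3$ and writing down Example-\ref{3choose2}-style payoffs does not finish the proof, because you must verify that \emph{every} greedy step is a best response over \emph{all} of $\Sigma$ and that no other strategy absorbs the players or neutralizes the intended deviation; strategies consisting of $A$ or $B$ together with resources of $s_1\cup s_2\cup s_3$ cannot be penalized without also penalizing $s_1,s_2,s_3$, as you yourself note. This is not a residual technicality but the bulk of the paper's argument: Lemma \ref{no_resources} first disposes of the sub-case $(s^2\cap s^3)\setminus s^1\neq\emptyset$ (the only case where the Example-\ref{3choose2} mechanism with a shared third resource is even available) with a two-player game; Lemma \ref{exists s4} then shows that either a specific two-player all-negative game already separates $Z$ from $NE$, or there exists a fourth strategy $s^4\subseteq s^2\cup(s^3\setminus s^1)$; and Proposition \ref{sufficient} concludes not by a direct counterexample but indirectly, splitting on whether $s^4$'s extra resource lies in $s^1$ and deriving $s^4\subseteq s^3$, contradicting subset-freeness. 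None of this machinery appears in your sketch, so the ``only if'' half of the corollary remains unproven.
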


where ${\hat{\cal G}}(F)$ denotes the set of single signed monotone congestion games with the game form $F$.

We split the proof of Theorem \ref{main theorem 1} into two propositions, one showing that tree representability is sufficient for the desired equivalence and the other showing it is necessary.

\subsection{Tree Representability is Sufficient}

As noted in \cite{holz03} TRCG are equivalent to extension parallel network games. For such games Fotakis \cite{fotakis10}, showed in Theorem 1 the following:
\begin{theorem}\label{fotakis thm}
For any n player symmetric congestion game on an extension parallel network, every best response sequence reaches a pure NE in at most n steps.
\end{theorem}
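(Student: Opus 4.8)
The plan is to read a best-response sequence as a sequence of profiles $s^{(0)}, s^{(1)}, \ldots$ in which at each step a single player who is currently not best-responding switches to a best response against the others' current strategies, starting from an \emph{arbitrary} initial profile $s^{(0)}$. Such a sequence can only halt at a profile where no player wishes to deviate, i.e. at a pure NE, so the whole statement reduces to a bound on the number of steps. Since there are $n$ players, the bound ``at most $n$ steps'' is exactly the assertion that each player switches at most once. I would therefore aim to prove the following monotonicity lemma, which is the real content: on an extension-parallel (equivalently, by Theorem \ref{BadConfThm}, tree representable) network, once a player has performed a best-response switch, every subsequent best-response switch by any other player leaves the first player still playing a best response. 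Granting this lemma, no player is ever activated a second time, the sequence has at most $n$ switches, and it terminates at a NE.

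To set up the lemma I would exploit the tree representation. Fix the $R$-tree representing $\Sigma$; each strategy is a root-to-leaf path, any two strategies agree on a common prefix and, once they split at a node $w$, enter disjoint subtrees and never again share a resource (this laminar structure is precisely what the absence of a bad configuration buys us). Two easy reductions come first. A resource lying on every path (a ``series'' resource, for instance when the root has a single child) always carries congestion $n$ and hence contributes the same constant $P_r(n)$ to every player's payoff; deleting such resources changes no player's preferences and lets me peel the tree down to its genuine branch points. At a branch point the comparison between any two candidate strategies for a player cancels the shared prefix and depends only on the resources in the symmetric difference, which sit in two disjoint subtrees.

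The heart of the argument is then a chain-of-preferences computation. Suppose player $i$ is best-responding on path $p_i$ and player $j$ switches from $q$ to a best response $q'$; writing $D^-=q\setminus q'$ and $D^+=q'\setminus q$, the switch lowers congestion by one on $D^-$ and raises it by one on $D^+$, and by the tree structure $D^-$ and $D^+$ lie in two disjoint subtrees below the node $w$ where $q$ and $q'$ split. The only way $i$ could be tempted to move is if $i$'s current path uses a resource of $D^+$ (which just got worse) or some alternative for $i$ uses a resource of $D^-$ (which just got better). In the first situation the fact that $j$ found it profitable to move onto $D^+$ at the higher congestion bounds how unattractive $D^+$ can be, and combined with $i$'s original best-response inequality this keeps $p_i$ optimal; this is exactly the transitivity one checks by hand in the three-parallel-edge case, where $i$'s preference for $e_2$ over $e_1$ and $j$'s preference for $e_1$ over $e_3$ force $i$'s continued preference for $e_2$ over $e_3$. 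In the second situation the vacated resources form a single subtree that $i$ had already declined in its original best response; since the congestion on it drops by exactly one while $i$'s chosen branch is unaffected, the same monotone comparison shows the improved branch still cannot beat $p_i$. Carrying out this case analysis at the split node $w$, and invoking induction on the recursive (parallel-of-extensions) construction of the network for the sub-networks hanging below $w$, establishes the lemma.

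I expect the main obstacle to be the second situation, where another player's departure makes a rival branch more attractive to $i$: here one cannot argue locally from $i$'s current payoff alone but must track how the best attainable value in the vacated subtree moves relative to $i$'s chosen value, and show that the two best-response inequalities (the one $i$ satisfied before and the one $j$ satisfied when switching) chain together to preclude a reversal. It is exactly here that extension-parallelism is indispensable: the laminar, non-rejoining structure of the paths is what makes the relevant comparisons involve disjoint, independently-congested subtrees, so that a single $\pm 1$ change in congestion cannot be leveraged into a profitable deviation. Absent this structure a bad configuration reintroduces paths that split and later rejoin, the clean chain of inequalities breaks, and---as Example \ref{3choose2} shows---best-response behaviour need no longer stabilise.
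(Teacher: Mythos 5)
First, a point of reference: the paper does not prove this statement at all --- it imports it verbatim as Theorem 1 of Fotakis \cite{fotakis10} and only builds on it (in Lemma \ref{fotakis_addon}). So your attempt can only be measured against Fotakis's published argument, and there your skeleton is essentially his: reduce ``at most $n$ steps'' to ``each player moves at most once,'' which follows from a preservation lemma stating that a best-response move by $j$ leaves every currently best-responding player $i$ best-responding; prove that lemma by exploiting the prefix-then-disjoint (laminar) structure of root-to-leaf paths in the $R$-tree together with symmetry, which makes $j$'s best-response inequalities usable by $i$. The reduction itself, the observation that $D^-=q\setminus q'$ and $D^+=q'\setminus q$ lie in disjoint subtrees below the split node $w$ (so no single path can meet both), and the chaining in the parallel-link illustration are all correct.

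However, there is a genuine gap exactly where you predict one, and your stated justification for the ``second situation'' is actually false as written: you claim that since $i$ ``had already declined'' the vacated subtree, ``the same monotone comparison'' still rules out the alternative $t$. It does not --- $i$'s old rejection of $t$ was evaluated at congestion one unit higher on $t\cap D^-$, and that inequality simply does not survive the drop; this is the entire difficulty. The repair, which you gesture at but never carry out, is a single chain through $j$'s \emph{new} path $q'$, and it also covers the mixed case (where simultaneously $p_i\cap D^+\neq\emptyset$ and $t\cap D^-\neq\emptyset$) that your two-situation dichotomy never separately addresses. Concretely, after cancelling common prefixes so that all comparisons localize below the relevant divergence nodes: (i) $i$'s pre-move best response against deviating to $q'$ gives $U_i^{\mathrm{post}}(p_i)\ \geq\ V_j(q')$, where $V_j$ is $j$'s valuation at the moment of switching --- the worsening on $p_i\cap D^+$ is absorbed because those resources enter both sides at congestion $x_r+1$; (ii) $j$'s best response gives $V_j(q')\ \geq\ V_j(t)$, since by symmetry $t$ was available to $j$; (iii) the laminar structure forces $j$'s valuation of $t$ to coincide with $i$'s post-move valuation of deviating to $t$: on $t\cap D^-$ both see congestion $x_r$ (for $j$ because $x_r-1+1=x_r$, for $i$ because the congestion dropped before adding herself), and elsewhere below the divergence both see $x_r+1$. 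Chaining (i)--(iii) yields $U_i^{\mathrm{post}}(p_i)\geq U_i^{\mathrm{post}}(t)$ for every alternative $t$, which is the preservation lemma. Without this computation --- or, alternatively, without actually setting up the induction on the extension/parallel construction that you invoke in one sentence (base case, extension step, parallel step) --- the proposal remains a plan rather than a proof at its acknowledged crux.
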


From here it is easy to show the following lemma:

\begin{lemma}\label{fotakis_addon}
Let $G$ be a TRCG. Then $Z(G) \subseteq NE(G)$.
\end{lemma}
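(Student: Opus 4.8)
The plan is to derive Lemma \ref{fotakis_addon} directly from Theorem \ref{fotakis thm}, using the identification (noted in the excerpt, following \cite{holz03}) that tree representable congestion game forms are exactly the extension-parallel network game forms. The key conceptual point is that a greedy profile is nothing but the terminal profile of a particular best-response sequence, so Theorem \ref{fotakis thm} applies to it.

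First I would fix an arbitrary greedy profile $s \in Z(G)$ together with the permutation $\pi$ witnessing it, so that each player $\pi^{-1}(i)$ plays a best response to the strategies of all earlier players. The idea is to read the greedy arrival process as a best-response sequence in the sense of Theorem \ref{fotakis thm}: start from an arbitrary initial profile and have players move one at a time in the order given by $\pi$, each switching to a best response against the current profile of the others. The subtlety here is that in Fotakis's framework all $n$ players are present throughout and simply update, whereas in the greedy process a player who has not yet arrived contributes nothing to congestion. I would reconcile the two by observing that in a greedy profile, once a player has chosen, her choice is irrevocable and optimal given the earlier choices; since congestion on any resource is monotone and only later arrivals can increase it, the initial segment of the greedy process corresponds to a valid prefix of a best-response sequence in which the not-yet-arrived players are treated as making their (identical) moves last. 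Invoking Theorem \ref{fotakis thm}, this sequence reaches a pure NE in at most $n$ steps, and because each player moves exactly once and optimally, the terminal profile is exactly the greedy profile $s$.

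Concretely, the cleanest route is: given $s \in Z(G)$ with order $\pi$, construct the best-response sequence in which player $\pi^{-1}(1), \pi^{-1}(2), \ldots, \pi^{-1}(n)$ each move once in turn, each playing a best response to the profile induced by the moves made so far. By the symmetry of the game (all players share the strategy set $\Sigma$ and the common resource payoffs), the best response of the $i$-th mover against the first $i-1$ movers' choices is precisely the greedy choice $s^{\pi^{-1}(i)}$, so this best-response sequence terminates in $s$. Theorem \ref{fotakis thm} guarantees that \emph{every} best-response sequence in a symmetric congestion game on an extension-parallel network reaches a pure NE in at most $n$ steps; since our sequence has exactly $n$ moves and ends at $s$, its terminal profile $s$ must be a pure NE. Hence $s \in NE(G)$, and as $s$ was arbitrary, $Z(G) \subseteq NE(G)$.

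The main obstacle is the bookkeeping that aligns the greedy irrevocable-arrival model with Fotakis's all-players-present best-response-dynamics model. I would need to argue carefully that the greedy best response, computed against only the already-arrived players, coincides with a best response in the full game along the constructed sequence — this relies on the observation (implicit in the monotonicity and symmetry assumptions) that a best response computed at arrival against a subset of players remains the move that the best-response dynamics would make at that point, because players who have not yet moved occupy a well-defined (e.g. all-identical or null-contribution) configuration that the sequence orders after the current mover. Once this identification is made rigorous, the conclusion is immediate from Theorem \ref{fotakis thm}, which is why the statement can reasonably be advertised as following ``easily.''
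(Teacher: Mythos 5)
Your overall strategy is the same as the paper's: view the greedy arrival process as a best-response sequence and invoke Theorem \ref{fotakis thm}. However, there is a genuine gap at exactly the point you flag as ``the main obstacle,'' and the resolutions you sketch do not close it. In Fotakis's model every player occupies some strategy of $\Sigma$ at every stage, and a best-response move is computed against the \emph{full} current profile of the other $n-1$ players. A ``null-contribution configuration'' is simply not available: every strategy in $\Sigma$ uses real resources and hence adds congestion. If instead you park the not-yet-arrived players on some common strategy $t\in\Sigma$, they inflate the congestion on the resources of $t$, so the best response of the $i$-th mover against the full current profile is in general \emph{not} her greedy best response against only the first $i-1$ arrivals, and the terminal profile of your sequence need not be $s$ at all. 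Deferring the unarrived players' moves to ``last'' does not help, because their mere presence already distorts the earlier movers' choices.

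The paper closes this gap with a small construction you are missing: augment $G$ to a game $G'$ by adding one fresh resource $r_l$ with payoffs $L-1,\dots,L-N$, where $L$ is the lowest payoff attainable in $G$, and add the singleton strategy $\{r_l\}$ to the strategy set. (One should also note that $G'$ is still a TRCG, i.e.\ extension-parallel: $\{r_l\}$ is a new parallel branch using a fresh resource and creates no bad configuration.) Starting from the profile in which all players sit on $r_l$ and letting them best-respond once each in the order $\pi$ with tie-break $\tau$ gives a genuine best-response sequence in $G'$: players still on $r_l$ contribute no congestion to the resources of $G$, and $r_l$ is never a best response since it is strictly worse than any strategy of $G$ at any congestion level. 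Hence each move reproduces exactly the greedy choice, the terminal profile is the greedy profile, Theorem \ref{fotakis thm} makes it a NE of $G'$, and it is then a NE of $G$ since $G'$ differs only by one never-profitable strategy. Without some such device, your identification of the greedy process with a best-response sequence does not go through.
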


\begin{proof}
Let $L$ be the lowest resource payoff which can be attained in $G$. Let us add a resource $r_l$ to the game $G$ with the resource payoffs $L-1,L-2, \ldots L-N$. Let us add to the set of strategies of $G$ the strategy $r_l$. Let us denote the extended game as $G'$. Let $s$ be a strategy profile of $G'$ where all players select the resource $r_l$.

Let $z$ be a greedy behavior strategy profile of $G$ with the ordering $\pi$ and a draw breaking rule $\tau$.

Let us select the first player according to the ordering $\pi$ and denote her $i_1$. Let us relocate her to the best responses to $s^{-i_1}$, in case of several the strategy selected by $\tau$. Let us denote the obtained profile as $s_1$. Similarly, let us select the second agent in $\pi$ and relocate her to the best response to $s_{1}^{-i_2}$, in case of several the strategy selected by $\tau$. After continuing in such manner once for every agent we will obtain the strategy profile $s^N$. Following theorem \ref{fotakis thm}, $s^N$ is a NE of $G'$.

Note that the best response of any agent is a strategy in $G$, as the payoff from staying on $r_l$ is strictly lower than relocating to any strategy in $G$, no matter the congestion on all other resources. Therefore, this mechanism is identical to greedy behavior of the original game, and since after $N$ steps no player selects $r_l$ we obtain a strategy profile of $G$. Note that any NE of $G'$ is also a NE of $G$, as $G'$ has one strategy more than $G$, and besides the two games are identical. Therefore, $s^N$ is a NE of $G$.

Therefore, any greedy strategy profile with ordering $\pi$ and draw breaking rule $\tau$ is a NE of the game $G$.

\end{proof}


\begin{lemma} \label{AllNEGreedy}

If $F$ is tree representable then in any $G \in {\cal G}(F)$, $NE(G) \subset Z(G)$.
\end{lemma}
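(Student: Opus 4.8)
The plan is to take an arbitrary Nash equilibrium profile $s \in NE(G)$ of a tree-representable game and exhibit an ordering $\pi$ and tie-breaking rule $\tau$ under which the greedy process reproduces exactly $s$, thereby showing $s \in Z(G)$. The natural ordering to try is one that respects the tree structure: since each strategy in a TRCG corresponds to a root-to-leaf path in an $R$-tree, I would order the players according to how ``deep'' their strategies branch off, intuitively letting players whose paths share long common prefixes near the root be introduced first. The key property I want to extract from tree representability is the \emph{absence of a bad configuration} (Theorem~\ref{BadConfThm}): whenever two strategies share a resource $A$, the set of resources they share forms a common prefix of the tree path, so the paths agree up to some branching node and then separate. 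This should let me argue that the greedy best responses, processed in the right order, never need to revise earlier choices.

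First I would fix $s \in NE(G)$ and, for each resource $r$, record the equilibrium congestion $c(s)_r$. The goal is to find an order in which, when player $i$ arrives and sees the partial congestion induced by her predecessors, her equilibrium strategy $s^i$ is already a best response (so $\tau$ can select it). Concretely, I would try to build the ordering greedily from the leaves inward or from the root outward, using the tree to define a partial order on strategies; the crucial claim to establish is that one can linearly extend this so that when player $i$ is placed, the congestion she faces on each resource $r \in s^i$ equals $c(s)_r - 1$ on the resources she is ``last'' to occupy and the correct partial count elsewhere. The monotonicity of the payoffs together with the Nash condition $U^i(s) \ge U^i(t)$ for all $t$ should then give that $s^i$ maximizes $\sum_{r \in s^i} P_r(\text{partial congestion} + 1)$ against the predecessors, because the tree structure forces the relevant comparisons to involve only resources on the shared prefix, where the partial and full congestions coincide.

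The heart of the argument is therefore a lemma that says, in a bad-configuration-free form, any two strategies $s^i$ and a candidate deviation $t$ split cleanly: they share a prefix (the initial segment of the root-to-leaf path), and beyond the branching point they are disjoint. On the shared prefix the congestion seen by player $i$ at her arrival time is irrelevant to the comparison between $s^i$ and $t$ since those terms cancel, so the best-response comparison reduces to the disjoint tails, where I can arrange the arrival order so that the partial congestion matches the eventual equilibrium congestion faced. Thus the best response under greedy coincides with the equilibrium choice.

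I expect the main obstacle to be constructing the ordering $\pi$ rigorously and proving it is consistent for \emph{all} players simultaneously: it is one thing to see that a single player's equilibrium strategy is a best response given the right predecessor configuration, but I must produce one global order and tie-breaking rule under which \emph{every} player's equilibrium strategy is simultaneously a greedy best response to her actual predecessors. The tree structure should make this possible via a topological-type ordering on branching nodes, but verifying that the induced partial congestion vectors line up with the equilibrium congestion vector at every arrival time, using only monotonicity and the no-bad-configuration property, is the delicate step. A clean way to organize this may be induction on the number of players or on the depth of the tree, peeling off a player occupying a deepest (leaf) resource and invoking the induction hypothesis on the reduced game.
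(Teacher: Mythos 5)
Your high-level plan (realize a given $s \in NE(G)$ as the outcome of a greedy process under a suitable order $\pi$ and tie-breaking rule $\tau$) is the right one, and you correctly flag where the difficulty lies, but the proposal has a genuine gap at exactly that point, and the ordering heuristic you propose is aimed at the wrong quantity. Write $n^i_r$ for the congestion player $i$ sees on resource $r$ at her arrival. After cancelling the shared prefix, the greedy condition you need is $\sum_{r\in s^i\setminus t}P_r(n^i_r+1)\ \ge\ \sum_{r\in t\setminus s^i}P_r(n^i_r+1)$, while the Nash condition gives $\sum_{r\in s^i\setminus t}P_r(c(s)_r)\ \ge\ \sum_{r\in t\setminus s^i}P_r(c(s)_r+1)$. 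Monotonicity transfers the left-hand side in the right direction (since $n^i_r+1\le c(s)_r$ for $r\in s^i$), but on the deviation's tail $t\setminus s^i$ it goes the \emph{wrong} way: there $n^i_r\le c(s)_r$, so $P_r(n^i_r+1)\ge P_r(c(s)_r+1)$, and the deviation looks weakly \emph{more} attractive at arrival time than at equilibrium. Equality holds only when every eventual user of $r$ precedes $i$, which in general can be guaranteed only for the player placed last. No ordering by tree depth or by branching nodes fixes this, and the prefix-cancellation observation is of no help here because the problematic terms live entirely on the disjoint tails.

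The paper closes this gap with two ingredients you are missing. First, the player peeled off and placed last is not one at a deepest leaf but the one with the \emph{lowest equilibrium utility}; for her, $n^n_r=c(s)_r$ on every $r\notin s^n$, so the Nash inequality literally becomes the greedy best-response inequality. Second --- and this is where tree representability is actually used --- one must prove that the projection $P(s)$ of $s$ onto the remaining $n-1$ players is still a Nash equilibrium of the $(n-1)$-player game, so that induction applies. That lemma is not a one-line consequence of prefix sharing: it proceeds by assuming a profitable deviation $\sigma$ for some player $j$ in the reduced game and running a three-way case analysis according to which of the paths $\sigma$, $s^j$, $s^n$ branches off the common trunk first; each case combines a Nash inequality ($U^j(s)\ge U^j(\tilde s)$ or $U^n(s)\ge U^n(\bar s)$) with the minimality of $U^n(s)$ (which yields $U^j(P(\ddot s))\le U^j(P(s))$) to contradict the profitability of $\sigma$. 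Without the minimal-utility choice of the removed player and without this case analysis, the induction you sketch cannot be completed.
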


\begin{proof}
The claim is trivial for any single player game. Let us assume it holds for $n-1$ and show that it must also hold for $n$ players.
Let $ s=( s^1,\dots s^n) \in NE(G)$. We will assume, without loss of generality, that utility of agent $n$ is the lowest among all agents:
\begin{equation}
U^n( s) \le U^i(s) \ \forall i.
\end{equation}
We denote by $P(s)$ the projection of the strategy tuple $s$ onto players $1,\ldots,n-1$.
We argue that $P( s)$ is a Nash equilibrium of the game with $n-1$ agents.

Assume it is not, then there exists an agent $j$ who can profitably deviate to some strategy $\sigma$.
Let $\tilde P(s) = (P(s)^{-j},\sigma)$ denote the strategy tuple of the $n-1$ players after such deviation. Formally, $\tilde s = (s ^{-j}, \sigma)$, and we can say that:
\begin{equation}\label{eq101}
U^j(P(\tilde s)) > U^j(P(s)).
\end{equation}

Let us denote by $\bar s = ( s^{-n},\sigma)$, the strategy profile obtained from $s$ by replacing the strategy of agent $n$ with $\sigma$. Similarly, denote  by $\ddot{s}$ the strategy tuple obtained from $s$ by replacing agent $j$'s strategy with $s^n$, namely: $\ddot s = ( s^{-j},s^n)$. As $s$ is a Nash equilibrium:
\begin{equation}\label{eq15}
U^n(s) \geq U^n(\bar s), \ \ \
U^j(s) \geq U^j(\ddot{s}).
\end{equation}



Note the following connection between the corresponding congestion vectors:
\begin{itemize}
\item
 $\forall r \in s^j \cap s^n$,  $C(s)_r -1 = C(P(\ddot s))_r = C(P(s))_r $;
\item
  $\forall r \in s^j \setminus s^n$,  $C(s)_r = C(P(s))_r $; and
\item
$\forall r \in s^n \setminus s^j$, $C(P(\ddot s))_r = C(s)_r$.
\end{itemize}
Therefore:
\begin{eqnarray*}
U^j(P(s)) = \sum_{r \in s_j \cap s_n}P_r(C(s)_r -1) + \sum_{r \in s_j \setminus s_n}P_r(C(s)_r) \\
U^j(P(\ddot{s})) = \sum_{r \in s_j \cap s_n}P_r(C(s)_r -1) +  \sum_{r \in s_n \setminus s_j}P_r(C(s)_r)
\end{eqnarray*}
which implies
\begin{equation}\label{eq100}
U^j(P(\ddot{s})) - U^j(P(s)) = \sum_{r \in s_n \setminus s_j}P_r(C(s )_r) - \sum_{r \in s_j \setminus s_n}P_r(C(s )_r).
\end{equation}

Since we know that $U^n(s) \leq U^j(s)$, the difference denoted in equation \ref{eq100} is non negative. Thus, we can say that:
\begin{equation}\label{allNEgrEQ11}
U^j(P(\ddot{s})) \leq U^j(P(s)).
\end{equation}

Note that the three strategies, $s^j,s^n, \sigma$ are not all identical, as $s^j$ and $\sigma$ are different.

The game $G$ is tree representable and therefore the following 3 cases, depicted in figure \ref{3casesfig}, exhaust all the possibilities on the connection among the three strategies $s^j, s^n, \sigma$ which differ on the strategy that branches off the tree path first (they are not necessarily mutually exclusive):

\begin{itemize}
\item Case 1 - $\sigma \cap s^j \cap s^n = \sigma \cap s^j = \sigma \cap s^n$ (the path representing $\sigma$ branches off first).
\item Case 2 - $\sigma \cap s^j \cap s^n = s^j \cap \sigma  = s^j \cap s^n$ (the path representing $s^j $ branches off first).
\item Case 3 - $\sigma \cap s^j \cap s^n = s^n \cap \sigma  = s^n \cap s^j$ (the path representing $s^n $ branches off first).
\end{itemize}

\begin{figure}
  \centering
    \includegraphics[height=60mm]{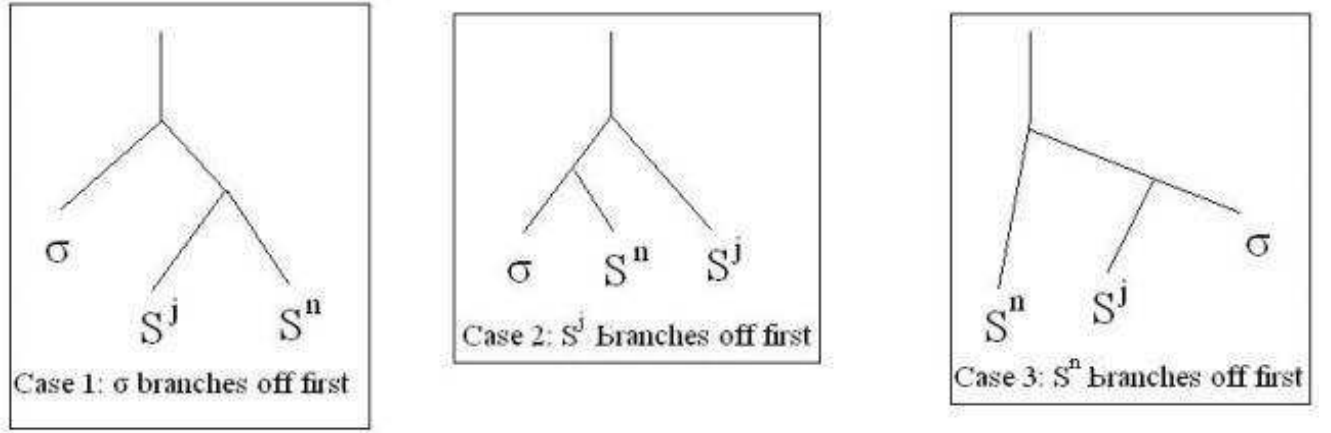}
  \caption{The three cases as described by the lemma}
  \label{3casesfig}
\end{figure}

{\bf Case 1 - $\sigma \cap s^j \cap s^n = \sigma \cap s^j = \sigma \cap s^n$}.

Since $s$ is a NE, we have that $U^j(s) \geq U^j(\tilde s)$. Therefore:
\begin{equation*}
\sum_{r \in s^j \setminus \sigma}P_r(C(s)_r)+
\sum_{r \in s^j \cap s^n \cap \sigma}P_r(C(s)_r) -
\sum_{r \in s_j \cap s^n \cap \sigma}P_r(C(\tilde s)_r) -
\sum_{r \in \sigma \setminus s_j}P_r(C(\tilde s)_r)\geq 0.
\end{equation*}

Note that $\forall r \in s^j \cap s^n \cap \sigma$,  $C(s)_r = C(\tilde s)_r$. Therefore, the second and third element cancel out. In addition, $\forall r \in s^j \setminus \sigma,$  $C(P(s))_r \leq C(s)_r$ and $\forall r \in  \sigma \setminus s^j,$  $C(P(\tilde s)_r) = C(\tilde s)_r$. As $P_r$ are strictly decreasing, the last inequality implies:
\begin{equation*}
\sum_{r \in s^j \setminus \sigma}P_r(C(P(s))_r)-
 \sum_{r \in \sigma \setminus s^j}P_r(C(P(\tilde s))_r)\geq 0.
\end{equation*}

Moreover, as $\ C(P(s))_r = C(P(\tilde s))_r \ \ \forall r \in s^j \cap s^n \cap \sigma$:
\begin{equation*}
\sum_{r \in s^j \setminus \sigma}P_r(C(P(s))_r)+
\sum_{r \in s^j \cap \sigma}P_r(C(P(s))_r)-
 \sum_{r \in \sigma \setminus s^j}P_r(C(P(\tilde s))_r)-
 \sum_{r \in s^j \cap \sigma}P_r(C(P(\tilde s))_r)\geq 0.
\end{equation*}
The last inequality implies that $U^j(P(s)) \geq U^j(P(\tilde s))$ which contradicts inequality \ref{eq101}.

{\bf Case 2 - $\sigma \cap s^j \cap s^n = s^j \cap \sigma  = s^j \cap s^n$}.

We know that $s$ is a NE, thus $U^n(s) \geq U^n(\bar s)$. Thus:
\begin{eqnarray*}
\sum_{r \in s^n \setminus \sigma}P_r(C(s)_r) +
\sum_{r \in \sigma \cap s^n} P_r(C(s)_r) \geq
\sum_{r \in \sigma \setminus s^n}P_r(C(\bar s)_r) -
\sum_{r \in \sigma \cap s^n}P_r(C(\bar s)_r) \\
\end{eqnarray*}

Note that $\forall r \in s^n \cap \sigma$,$C(\bar s)_r= C(s)_r$. Therefore the second and fourth element cancel out:
\begin{equation*}
\sum_{r \in s^n \setminus \sigma}P_r(C(s)_r) -
\sum_{r \in \sigma \setminus s^n}P_r(C(\bar s)_r) \geq 0.
\end{equation*}

Additionally, $\forall r \in \sigma \setminus s^n$,  $C(\bar s)_r= C(P(\tilde s))_r$. Similarly, $\forall r \in s^n \setminus \sigma$, $C(s)_r= C(P(\ddot s))_r$ Thus:

\begin{equation*}
\sum_{r \in s^n \setminus \sigma}P_r(C(P(\ddot{s})_r)
 - \sum_{r \in \sigma \setminus s^n}P_r(C(P(\tilde s))_r) \geq 0.
\end{equation*}

Moreover, $\forall r \in s^n \cap \sigma$, $C(P(\tilde s))_r= C(P(\ddot s))_r$ Thus:

\begin{equation*}
\sum_{r \in s_n \setminus \sigma}P_r(C(P(\ddot{s})_r) +
\sum_{r \in s_n \cap \sigma}P_r(C(P(\ddot{s})_r)-
\sum_{r \in \sigma \setminus s^n}P_r(C(P(\tilde s))_r) -
\sum_{r \in s^n \cap \sigma}P_r(C(P(\tilde s))_r) \geq 0.
\end{equation*}

We get that $U^j(P(\ddot{s})) \geq U^j(P(\tilde s))$. Combined with inequality \ref{allNEgrEQ11} we reach a contradiction with inequality \ref{eq101}.

{\bf Case 3 - $\sigma \cap s^j \cap s^n = s^n \cap \sigma  = s^n \cap s^j$}.

As $s$ is a NE, we know that $U^j(s) \geq U^j(\tilde s)$.
Therefore:
\begin{equation*}
\sum_{r \in s^j \setminus \sigma}P_r(C(s)_r)+
\sum_{r \in (s^j \cap \sigma)}P_r(C(s)_r) \geq
\sum_{r \in \sigma \setminus s^j}P_r(C(\tilde s)_r)+
\sum_{r \in (s^j \cap \sigma) }P_r(C(\tilde s)_r).
\end{equation*}

Note that $\forall r \in s^j \cap \sigma $, $C(\tilde s)_r= C(s)_r$, thus the second and fourth element cancel out. Additionally, $\forall r \in s^j \cap \sigma $, $C(P(\tilde s))_r= C(P(s))_r$ Therefore:
\begin{equation*}
\sum_{r \in s^j \setminus \sigma}P_r(C(s)_r)+
\sum_{r \in s^j \cap \sigma}P_r(C(P(s))_r)\geq
\sum_{r \in \sigma \setminus s^j}P_r(C(\tilde s)_r)+
\sum_{r \in s^j \cap \sigma}P_r(C(P(\tilde s))_r).
\end{equation*}

Moreover, $\forall r \in (s^j \setminus \sigma)$, $C(P(s))_r=C(s)_r$ and $\forall r \in (\sigma \setminus s^j)$, $C(\tilde s)_r= C(P(\tilde s))_r$. Therefore:

\begin{equation*}
\sum_{r \in s^j \setminus \sigma}P_r(C(P(s))_r)+
\sum_{r \in s^j \cap \sigma}P_r(C(P(s))_r)\geq
\sum_{r \in \sigma \setminus s^j}P_r(C(P(\tilde s))_r)+
\sum_{r \in s^j \cap \sigma}P_r(C(P(\tilde s))_r).
\end{equation*}

The last inequality states that $U^j(P(s)) \geq U^j(P(\tilde s))$, once again contradicting inequality \ref{eq101}.

Thus, $P(s)$ must be a Nash equilibrium for the $n-1$ players. Using the induction hypothesis we conclude that $P(s)$ is a greedy profile for players $1,\ldots,n-1$. As player $n$ best-replies to $P(s)$ (recall that $s$ is a Nash equilibrium) we conclude that $s$ is a greedy profile, as desired.
\end{proof}

Combining lemmas \ref{fotakis_addon} and \ref{AllNEGreedy} yields the following direction in the statement of Theorem \ref{main theorem 1}:
\begin{proposition} \label{TRGG_GNE}
Let $F$ be a subset free congestion game form. If $F$ is tree representable then $G \in {\cal G}(F) \implies Z(G) = NE(G)$.
\end{proposition}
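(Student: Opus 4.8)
The plan is to prove the set equality $Z(G) = NE(G)$ by establishing the two inclusions separately, since each has already been isolated in a preceding lemma; the proposition is then nothing more than their composition. First I would check that the hypotheses line up. Because $F$ is tree representable and subset free, every game $G \in {\cal G}(F)$ is itself a tree representable congestion game (its form is $F$), and by the very definition of ${\cal G}(F)$ each such $G$ is monotone, so its resource payoff functions $P_r$ are strictly decreasing. These are exactly the standing assumptions needed to invoke both Lemma \ref{fotakis_addon} and Lemma \ref{AllNEGreedy}.

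For the inclusion $Z(G) \subseteq NE(G)$ I would appeal directly to Lemma \ref{fotakis_addon}: since $G$ is a TRCG, every greedy profile of $G$ is a Nash equilibrium. Recall the mechanism driving that lemma, which I would merely cite rather than reprove: one adjoins a dummy resource $r_l$ carrying uniformly worst payoffs $L-1,\ldots,L-N$ together with the singleton strategy $r_l$, obtaining an augmented game $G'$. Because $r_l$ is strictly worse than any genuine strategy regardless of congestion, best-response dynamics in $G'$ never rest on $r_l$ and thus replicate greedy play in $G$, while Fotakis' Theorem \ref{fotakis thm} guarantees convergence to a Nash equilibrium of $G'$ in at most $N$ steps; that equilibrium is then also a Nash equilibrium of $G$.

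For the reverse inclusion $NE(G) \subseteq Z(G)$ I would invoke Lemma \ref{AllNEGreedy}, which states precisely that tree representability of $F$ forces every Nash equilibrium of $G$ to be realizable as a greedy profile. This is the induction-on-the-number-of-players argument: one selects a lowest-utility agent $n$, shows via the three mutually exhausting tree-branching cases that the projection $P(s)$ onto the first $n-1$ players is an equilibrium of the smaller game, applies the inductive hypothesis to make $P(s)$ greedy, and appends agent $n$'s best reply to recover $s$ as a greedy profile.

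Combining the two inclusions yields $Z(G) = NE(G)$ for every $G \in {\cal G}(F)$, which is the claim. There is essentially no obstacle at the level of the proposition itself; the only point demanding a moment of care — and hence the nearest thing to a difficulty — is verifying that the hypotheses transfer cleanly, namely that subset freeness together with tree representability of $F$ genuinely qualifies each $G \in {\cal G}(F)$ as a TRCG, and that the monotonicity built into ${\cal G}(F)$ supplies the strict monotonicity of the $P_r$ on which the case analysis of Lemma \ref{AllNEGreedy} relies. Everything beyond that is an immediate assembly of the two established lemmas.
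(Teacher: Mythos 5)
Your proof is correct and follows exactly the paper's route: the proposition is obtained by combining Lemma \ref{fotakis_addon} (giving $Z(G) \subseteq NE(G)$) with Lemma \ref{AllNEGreedy} (giving $NE(G) \subseteq Z(G)$), which is precisely what the paper does in one line.
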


\subsection{Tree Representability is Necessary}

We show that if a game form is non tree representable then it can be coupled with monotone resource payoff function to yields a game, $G$, without the equivalence of $NE(G)$ and $Z(G)$. Let $F=(R,\Sigma)$ be a game form that is not tree representable. By Theorem \ref{BadConfThm} there must exist 2 resources, $A,C \in R$ and three strategies, $s_1,s_2,s_3 \in \Sigma$ such that $A \in (s^1 \cap s^3) \setminus s^2$ and $C \in (s^1 \cap s^2) \setminus s^3$.

Recall that $F$ is assumed subset free. In particular there must exist resources $B$ and $D$ that satisfy
$B \in s^2 \setminus s^1$ and $D \in s^3 \setminus s^1$. We now argue that $B \not = D$ and more broadly that:


\begin{lemma}\label{no_resources}
If for any $G \in {\cal G}(F)$, $Z(G)=NE(G)$ then $(s^2\cap s^3) \setminus s^1 = \emptyset$.
\end{lemma}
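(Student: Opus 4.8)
The plan is to prove the lemma by contrapositive: assuming there exists a resource $E \in (s^2 \cap s^3) \setminus s^1$, I will construct a specific monotone payoff assignment on $F$ that yields a game $G \in {\cal G}(F)$ for which $Z(G) \neq NE(G)$, thereby contradicting the hypothesis. The strategies $s^1, s^2, s^3$ together with the distinguished resources give us a structure reminiscent of Example \ref{3choose2}, where greedy and equilibrium profiles diverge; the idea is to exploit exactly this bad configuration. Recall the forced structure: $A \in (s^1 \cap s^3) \setminus s^2$, $C \in (s^1 \cap s^2) \setminus s^3$, and now additionally the shared resource $E \in (s^2 \cap s^3) \setminus s^1$.

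First I would restrict attention to a two-player game and design the resource payoff functions so that the resources $A, C, E$ carry the weight of the incentives while all other resources are assigned payoffs that are either so favorable or so penalizing (at every congestion level, preserving strict monotonicity) that they never interfere with the relevant comparisons. The goal is to calibrate $P_A$, $P_C$, $P_E$ so that, played greedily, the first arriving player picks $s^1$ (which contains both $A$ and $C$, the most attractive resources when uncongested), and then the second player is driven onto a strategy whose only overlap with $s^1$ is through one of $A$ or $C$; but this greedy outcome fails to be a Nash equilibrium because the first player would then profit by shifting to share the congested resource $E$ instead. Concretely, I want to arrange payoffs so that $(s^1, s^2)$ or a similar pair is greedy but not an equilibrium, while the genuine equilibrium forces both players to route through $E$, which greedy play avoids.

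The key mechanism to engineer is the asymmetry between greedy reasoning and equilibrium reasoning around the shared resource $E$. In greedy play the first player ignores the second and grabs $s^1$; the second player then best-responds. For this to fail equilibrium, I need the first player's best response to the second player's actual strategy to differ from $s^1$, and that deviation should route through $E$ (or otherwise exploit the bad configuration). I would set the uncongested payoffs so $s^1$ looks best in isolation, but set the payoffs at congestion level $2$ on $A$ and $C$ low enough that, once the second player shares one of them, the first player prefers to abandon that resource in favor of $E$. Since $E \notin s^1$ but $E \in s^2 \cap s^3$, the deviation naturally lands the first player onto a strategy through $E$, and $E$'s congestion-$2$ payoff must be tuned to make this deviation strictly profitable.

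The main obstacle will be verifying that a single payoff assignment simultaneously achieves all the required inequalities: that $s^1$ is the unique greedy first choice, that the resulting greedy profile is not a Nash equilibrium, and that the irrelevant resources ($B$, $D$, and any others in $R \setminus \{A,C,E\}$) can be neutralized without breaking strict monotonicity $P_r(1) > P_r(2) > \cdots$. The delicate point is that $s^2$ and $s^3$ may contain resources beyond $\{A, C, E, B, D\}$ whose congestion changes under deviations, so I must ensure these extra resources contribute a fixed, dominated amount to every comparison. I would handle this by giving each such "neutral" resource an enormous negative payoff at congestion $2$ (so no two players ever want to share it) while keeping its congestion-$1$ payoff moderate, and by a counting argument confirm that the comparisons of interest reduce to the three-resource core $\{A, C, E\}$, at which point the inequalities mirror those of Example \ref{3choose2} and can be satisfied by direct choice of numbers.
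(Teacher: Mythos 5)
Your overall strategy is the same as the paper's: assume some $E\in(s^2\cap s^3)\setminus s^1$ exists, build a two-player monotone game whose incentives are concentrated on the core $\{A,C,E\}$ (with $s^1\supseteq\{A,C\}$, $s^2\cap\{A,C,E\}=\{C,E\}$, $s^3\cap\{A,C,E\}=\{A,E\}$), neutralize everything else, and exhibit a divergence between $Z(G)$ and $NE(G)$. The only cosmetic difference is which inclusion you break: you aim to produce a greedy profile $(s^1,s^2)$ that is not a Nash equilibrium, whereas the paper exhibits the Nash equilibrium $(s^3,s^2)$ and argues it cannot arise greedily because the first greedy mover always grabs the $\{A,C\}$-strategy. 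With the paper's numbers ($P_A=(10,1)$, $P_C=(9,6)$, $P_E=(8,7)$) both failures occur simultaneously, so either witness works and your calibration targets are satisfiable.

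There is, however, one step in your plan that would fail as described: the neutralization of the non-core resources. You propose to give each neutral resource a moderate payoff at congestion $1$ but an \emph{enormous negative} payoff at congestion $2$, and then claim these resources ``contribute a fixed, dominated amount to every comparison.'' They do not: nothing in the hypotheses prevents $s^1\cap s^2$, $s^1\cap s^3$, $s^2\cap s^3$, or $s^1\cap s^2\cap s^3$ from containing resources outside $\{A,C,E\}$, and any such shared neutral resource would hit congestion $2$ exactly in the profiles you need to compare. For instance, if $s^1\cap s^2$ contains a neutral resource $F$, then after player $1$ takes $s^1$, player $2$'s payoff from $s^2$ includes the enormous penalty $P_F(2)$ and she will not choose $s^2$; likewise the intended deviation of player $1$ from $s^1$ to $s^3$ against $s^2$ would be poisoned by any neutral resource in $s^2\cap s^3$. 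So the comparisons do \emph{not} reduce to the three-resource core under your scheme, and the construction collapses. The fix is the one the paper uses: make every non-core resource uniformly negligible at \emph{both} congestion levels (e.g., payoffs $1/M$ and $1/(2M)$ with $2|R|/M<1$), so that each contributes less than $1$ in absolute value to any utility and strict monotonicity is preserved; then your inequalities on $\{A,C,E\}$, chosen with slack at least $1$, go through. With that correction your argument is sound and essentially identical to the paper's.
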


\begin{proof}
Suppose that there exists some resource $E \in (s^2\cap s^3) \setminus s^1$. Then we have that:
\begin{eqnarray*}
s^1 \cap \{A,C,E\} =\{A,C\}\\
s^2 \cap \{A,C,E\} =\{C,E\}\\
s^3 \cap \{A,C,E\} =\{A,E\}
\end{eqnarray*}

Consider a 2 player game with the following payoff functions:

\begin{tabular}{|c|c|c|c|c|}
\hline
\# of players / Resource & A & C & E & other resources\\
\hline
1 & 10 & 9 & 8 & $\frac{1}{M}$\\
\hline
2 & 1 & 6 & 7 & ${\frac{1}{2M}}$\\
\hline
\end{tabular}\\

Let $M$ be sufficiently large to ensure that $\frac{2|R|}{M} <1$. This game does not satisfy $Z(G)=NE(G)$: for example, the NE involving AE and EC cannot be attained in a greedy manner. Thus we reach a contradiction.
\end{proof}

\begin{corollary}
$B$ and $D$ are two different resources.
\end{corollary}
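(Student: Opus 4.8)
The plan is to deduce this directly from Lemma \ref{no_resources}, which — under the standing assumption of this subsection, namely that $Z(G) = NE(G)$ for every $G \in {\cal G}(F)$ — guarantees that $(s^2 \cap s^3) \setminus s^1 = \emptyset$. So the only work left is to recast that disjointness statement as the inequality $B \neq D$.

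I would argue by contradiction. Suppose $B = D$. By the choice of $B$ we have $B \in s^2 \setminus s^1$, and by the choice of $D$ we have $D \in s^3 \setminus s^1$. If these two resources coincide, then the common resource simultaneously lies in $s^2$, lies in $s^3$, and lies outside $s^1$; in other words, it is a member of $(s^2 \cap s^3) \setminus s^1$. Consequently this set is nonempty, which directly contradicts the conclusion of Lemma \ref{no_resources}. Therefore $B \neq D$, as claimed.

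The only care required is the bookkeeping of set memberships — verifying that $B \in s^2$, that $D \in s^3$, and that both are absent from $s^1$ — after which the containment $B = D \in (s^2 \cap s^3) \setminus s^1$ is immediate. I do not expect any genuine obstacle here: the substantive content (constructing the payoffs that break the $Z(G)=NE(G)$ equivalence when a resource in $(s^2 \cap s^3) \setminus s^1$ exists) was already carried out in Lemma \ref{no_resources}, and this corollary merely reads off the resulting structural constraint in the form most convenient for the remainder of the necessity proof.
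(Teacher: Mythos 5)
Your argument is correct and is exactly the reasoning the paper intends: the corollary is stated as an immediate consequence of Lemma \ref{no_resources}, since $B=D$ would place that common resource in $(s^2\cap s^3)\setminus s^1$, which the lemma rules out. Nothing is missing.
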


\begin{lemma}\label{exists s4}
If for any $G \in {\cal G}(F)$, $Z(G)=NE(G)$ then  there exists a strategy $s^4 \neq s^2$ such that $s^4 \subset s^2 \cup (s^3 \setminus s^1)$.
\end{lemma}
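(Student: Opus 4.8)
The plan is to prove the contrapositive: assuming $Z(G)=NE(G)$ for every $G\in{\cal G}(F)$, I produce a best response whose existence is exactly the claimed $s^4$. Write $W:=s^2\cup(s^3\setminus s^1)$ and recall from Lemma \ref{no_resources} that $s^2\cap s^3\subseteq s^1$, so that $A\in s^1$ lies outside $W$ while the distinct resources $B\in s^2\setminus s^1$ and $D\in s^3\setminus s^1$ lie in $W$. The whole argument rests on choosing one convenient monotone game on the form $F$ and reading off player $2$'s best response.

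The construction is a two--player game with the resources split into three blocks, tuned by a large parameter $M$: every resource of $s^1$ (a ``hub'') is set to $P_r(1)=M,\ P_r(2)=-M^2$; every resource of the ``fresh'' block $(s^2\cup s^3)\setminus s^1$ is set to $P_r(1)=1,\ P_r(2)=\tfrac12$; and every remaining resource is set to $P_r(1)=-M^3,\ P_r(2)=-M^3-1$. Each $P_r$ is strictly decreasing, so the game is monotone and lies in ${\cal G}(F)$. First I would check that the greedy process must choose $s^1$ first: the empty--profile value of a strategy $t$ is $\sum_{r\in t}P_r(1)$, which for $M$ large is maximised uniquely by $s^1$, since $s^1$ alone gathers every hub (worth $M$ each) while touching no catastrophic resource, and subset--freeness forbids any other strategy from being a superset of $s^1$. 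Hence some greedy profile is $(s^1,t)$ with $t$ a best response to $s^1$.

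Now I would analyse $t$. Against $s^1$ the value of a strategy decomposes as
\[
\sum_{r\in t}P_r\big(C(s)_r\big)=-M^2\,|t\cap s^1|-M^3\,|t\cap(\text{remaining block})|+\big(\text{nonnegative fresh part}\big),
\]
so for $M$ large any best response avoids the remaining block entirely (already $s^2$ does, and it beats anything touching that block). Setting $H:=t\cap s^1$, there are two cases. If $H\subseteq s^2\cap s^3$, then every resource of $t$ lies in $W$ (its hubs lie in $s^2\cap s^3\subseteq s^2$ and its non--hubs lie in the fresh block), and $t\neq s^2$, for otherwise $C\in s^1\cap s^2$ would lie in $H$ while $C\notin s^3$, contradicting $H\subseteq s^3$; thus $t$ is the desired $s^4$ and we are done. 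Otherwise some hub $h\in H$ lies outside $s^2\cap s^3$, say $h\notin s^3$ (the case $h\notin s^2$ is symmetric, deviating to $s^2$).

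This last case is where the equilibrium hypothesis does the work, and is the step I expect to be the crux. In $(s^1,t)$ the hub $h$ carries congestion $2$ and costs the player on $s^1$ the penalty $-M^2$; I would compare $s^1$ with the deviation to $s^3$. Since resources common to $s^1$ and $s^3$ keep their congestion and cancel, the gain from deviating equals the fresh payoffs picked up on $s^3\setminus s^1$ minus the hub payoffs given up on $s^1\setminus s^3$; because $h\in(s^1\setminus s^3)\cap t$ contributes a saved $-M^2$ while the forgone hubs cost at most $M\,|s^1|$, this gain is strictly positive for $M$ large. Hence $(s^1,t)$ is not a Nash equilibrium, contradicting $(s^1,t)\in Z(G)=NE(G)$. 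This rules out the second case, leaving the first, in which $s^4=t$ exists. The main obstacle is precisely this deviation bookkeeping: one must separate the three payoff scales ($M\ll M^2\ll M^3$) so that avoiding a catastrophic resource outranks sharing a hub, and so that escaping a single over--congested hub dominates every hub reward simultaneously forgone, uniformly over all possible overlaps among $s^1,s^2,s^3$.
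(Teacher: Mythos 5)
Your proof is correct, and at its core it is the same device the paper uses: construct a two--player monotone game on $F$ in which the first greedy player is forced onto $s^1$, confine the second player's best response (essentially) to $W=s^2\cup(s^3\setminus s^1)$, and let a profitable deviation of the first player to $s^3$ (or $s^2$) contradict $Z(G)=NE(G)$ whenever that best response fails to deliver the desired $s^4$. The executions differ enough to be worth comparing. The paper partitions $R$ into seven cells with all--negative payoffs calibrated (via $M>|R|^9$) so that every resource outside $W$ is catastrophic for the second player; her best response is therefore contained in $W$ outright, and the contradiction is reached by reductio: if $s^2$ were the only strategy inside $W$, player $1$ would flee the congestion--$2$ penalty on $C$ by moving to $s^3$. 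Your three--tier, mixed--sign scheme does not price out the hubs of $s^1\setminus(s^2\cap s^3)$ for the second player, so you must treat the possibility that her best response $t$ contains such a hub $h$ as a separate case, which you correctly kill by deviating player $1$ to whichever of $s^2,s^3$ avoids $h$ and thereby saves the $-M^2$ term. In exchange you gain two things: the witness $s^4=t$ is read off directly from the best response rather than obtained by contradiction, and your three blocks partition $R$ unconditionally, so unlike the paper's seven--cell table your construction does not actually rely on Lemma \ref{no_resources} for well--definedness. The only trade--off worth noting is that your game is not single--signed (hub payoffs switch sign), whereas the paper's all--negative construction is; this is immaterial for the lemma as stated over ${\cal G}(F)$, but would matter if one wanted the counterexample to live in $\hat{\cal G}(F)$.
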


\begin{proof}
By Lemma \ref{no_resources} $(s^2 \cap s^3) \setminus s^1 = \emptyset$. Therefore the following assignment of resource payment functions for 2 players determines a monotone congestion game (where $M$ is an arbitrary large number satisfying $M>R^9$):

\begin{tabular}{c c c}
Resource Set & $P_r(1)$ & $P_r(2)$ \\
$s^1 \cap s^2 \cap s^3$ & $-1/M^2$ & $-1/R$ \\
$(s^1 \cap s^2) \setminus s^3$ & $-1/M^2$ & $-|R|^6$ \\
$(s^1 \cap s^3) \setminus s^2$ & $-1/|R|^5$ & $-2M$ \\
$s^1 \setminus (s_2 \cup s^3)$ & $-1/|R|^5$ & $-2M$ \\
$s^2 \setminus (s^1 \cup s^3)$ & $-1/|R|$ & $-2M$ \\
$s^3 \setminus (s^2 \cup s^1)$ & $-1/|R|^4$ & $-2M$ \\
$(s^1 \cup s^2 \cup s^3)^c$ & $-M$ & $-2M$
\end{tabular}

Note that resources not in $s^1$ are worse than any resource in $s^1$ by a factor of $|R|$. Therefore the first greedy player  must select the strategy $s^1$. The second greedy agent will need to select resources in $s^2 \cup (s^3\setminus s^1)$, as all other resources have utility of at least -M. Assume the statement of the lemma is incorrect and that the only such strategy is $s^2$ and so player 2 chose strategy $s^2$. Now note that the first greedy agent has a profitable deviation from $s^1$ to $s^3$, avoiding the high negative payoff on resources in $(s^1 \cap s^2) \setminus s^3$ (e.g., on resource $C$). This implies that $Z(G) \not = NE(G)$ and a contradiction is reached.
\end{proof}


Combined with subset-freeness of the strategy set we can now conclude:

\begin{corollary}\label{4strCor}
There exists a resource in $s^4$ which is also in $s^3 \setminus (s^1 \cup s^2)$. We can assume WLOG that this resource is $D$.
\end{corollary}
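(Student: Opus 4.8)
The plan is to argue by contradiction directly from the containment $s^4 \subseteq s^2 \cup (s^3 \setminus s^1)$ supplied by Lemma \ref{exists s4}, combining subset-freeness of $\Sigma$ with the structural conclusion of Lemma \ref{no_resources}.

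First I would rewrite the set on the right-hand side of that containment. By Lemma \ref{no_resources} we have $(s^2 \cap s^3) \setminus s^1 = \emptyset$, which says that every resource of $s^3$ lying outside $s^1$ also lies outside $s^2$; equivalently $s^3 \setminus s^1 = s^3 \setminus (s^1 \cup s^2)$. Hence $s^2 \cup (s^3 \setminus s^1) = s^2 \cup \bigl(s^3 \setminus (s^1 \cup s^2)\bigr)$, exhibiting the right-hand side as the union of $s^2$ with precisely the set $s^3 \setminus (s^1 \cup s^2)$ in which the sought resource should live.

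Next I would suppose, toward a contradiction, that $s^4$ meets no resource of $s^3 \setminus (s^1 \cup s^2)$. Since $s^4$ is contained in the union of $s^2$ and this set, the supposition forces $s^4 \subseteq s^2$. But Lemma \ref{exists s4} guarantees $s^4 \neq s^2$, so $s^4 \subsetneq s^2$, contradicting the assumption that $F$ is subset free. Therefore there exists a resource $r \in s^4 \cap \bigl(s^3 \setminus (s^1 \cup s^2)\bigr)$.

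Finally, for the ``without loss of generality'' clause I would note that $D$ was only ever required to be some element of $s^3 \setminus s^1$. The resource $r$ just produced satisfies $r \in s^3 \setminus (s^1 \cup s^2) \subseteq s^3 \setminus s^1$, so it is a legitimate choice, and we simply relabel $D := r$. The only genuine subtlety is the first step: without Lemma \ref{no_resources} the intersection of $s^4$ with $s^3$ could be absorbed into $s^2$ through $(s^2 \cap s^3) \setminus s^1$, and one could not conclude that the common resource lies outside $s^2$. Everything after that reduction is a one-line application of subset-freeness, so this is a short corollary rather than a proof with a hard core.
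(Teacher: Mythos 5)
Your argument is correct and is essentially the reasoning the paper intends: the paper states this corollary without proof as an immediate consequence of Lemma \ref{exists s4} ($s^4 \subseteq s^2 \cup (s^3\setminus s^1)$ with $s^4 \neq s^2$), Lemma \ref{no_resources} (which gives $s^3 \setminus s^1 = s^3 \setminus (s^1 \cup s^2)$), and subset-freeness ruling out $s^4 \subsetneq s^2$. Your relabeling of $D$ is also consistent with how $D$ is used later in Proposition \ref{sufficient}.
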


\begin{proposition}\label{sufficient}
Let $F$ be a subset free congestion game form. Then $Z(G)=NE(G)\ \  \forall G \in {\cal G}(F)$ implies $F$ is tree representable.
\end{proposition}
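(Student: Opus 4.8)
The plan is to prove the contrapositive by contradiction: assume that $Z(G)=NE(G)$ holds for every $G\in{\cal G}(F)$ and that, nonetheless, $F$ is not tree representable, and then exhibit a single $G\in{\cal G}(F)$ with $Z(G)\neq NE(G)$. This contradicts the standing assumption and forces $F$ to be tree representable. Since all of the preceding lemmas in this subsection were themselves derived under the hypothesis ``$Z(G)=NE(G)$ for all $G$'', the role of this proposition is to take the combinatorial structure they have extracted and drive it to one final, explicit contradiction.

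First I would assemble that structure. By Theorem \ref{BadConfThm}, non-tree-representability hands us a bad configuration: resources $A,C$ and strategies $s^1,s^2,s^3$ with $A\in(s^1\cap s^3)\setminus s^2$ and $C\in(s^1\cap s^2)\setminus s^3$, and by subset-freeness resources $B\in s^2\setminus s^1$ and $D\in s^3\setminus s^1$. Lemma \ref{no_resources} gives $(s^2\cap s^3)\setminus s^1=\emptyset$, whence $B\neq D$, $B\notin s^3$ and $D\notin s^2$ (together with $A\notin s^2$ and $C\notin s^3$ from the configuration itself). Finally Lemma \ref{exists s4} and Corollary \ref{4strCor} produce a fourth strategy $s^4\neq s^2$ with $s^4\subseteq s^2\cup(s^3\setminus s^1)$ and $D\in s^4\cap(s^3\setminus(s^1\cup s^2))$; note in particular that $A\notin s^4$, since $A\in s^1\cap s^3$ lies neither in $s^2$ nor in $s^3\setminus s^1$. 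This pins down the full membership pattern of $A,B,C,D$ across $s^1,s^2,s^3,s^4$, which is the only combinatorial input the closing step requires.

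Next I would build an explicit two-player, single-signed monotone game on $F$, assigning payoffs by resource class according to the partition of $R$ induced by membership in $s^1,s^2,s^3,s^4$ (plus the residual class $(s^1\cup s^2\cup s^3\cup s^4)^c$), and using a large separating parameter $M$ (polynomial in $|R|$, exactly as in the preceding lemmas) so that the relevant payoff gaps dominate the accumulated $O(1/M)$ ``noise'' from all other resources. The payoffs would be engineered so that (i) resources outside $s^1$ are so costly to the first arrival that the first greedy player is forced onto $s^1$; (ii) the second player's best response to $s^1$ is the strategy $s^4$, so $(s^1,s^4)$ is a greedy profile; yet (iii) this profile is not a Nash equilibrium, because one player has a strictly profitable deviation (built from the fact that $D\in s^4\cap s^3$ while $A\in s^1\cap s^3$, so relocation between $s^1,s^3,s^4$ changes exactly which of the contested resources is doubly loaded). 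This gives $Z(G)\not\subseteq NE(G)$, hence $Z(G)\neq NE(G)$. Symmetrically, one could instead arrange the payoffs so that an entrenched profile such as $(s^1,s^3)$ is an equilibrium that no ordering and tie-breaking rule can reproduce greedily, yielding $NE(G)\not\subseteq Z(G)$; either flavor closes the argument.

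The main obstacle is the payoff design and its simultaneous verification, not the logical skeleton. One must write strictly decreasing (monotone) payoff functions on each class and then check, all at once and across every class — including the awkward shared classes $s^1\cap s^2$, $s^1\cap s^3$, and the $s^2$-portion of $s^4$ — three strict inequalities: that the first greedy player prefers $s^1$ to every alternative when alone, that the second player's (tie-broken) best response to $s^1$ is the intended strategy, and that the resulting profile admits the claimed deviation. The delicate part is ensuring that the residual resources, each of magnitude $O(1/M)$, can never accumulate enough to reverse any of these strict comparisons; this is precisely where the large-$M$ scaling used throughout the section carries the proof.
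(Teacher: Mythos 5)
Your setup is right and matches the paper: assume $Z(G)=NE(G)$ for all $G\in{\cal G}(F)$ and non-tree-representability, pull a bad configuration from Theorem \ref{BadConfThm}, and invoke Lemmas \ref{no_resources} and \ref{exists s4} and Corollary \ref{4strCor} to get the four strategies and the four distinct resources $A,B,C,D$ with the stated membership pattern. But your closing step diverges from the paper's and is where the gap lies. The paper does \emph{not} finish by building one game from the full partition induced by $s^1,\dots,s^4$; it finishes \emph{structurally}, by proving $s^4\subset s^3$ and contradicting subset-freeness. Concretely: any $E\in s^4\setminus s^3$ must lie in $s^2\setminus s^3$ (because $s^4\subseteq s^2\cup(s^3\setminus s^1)$), and then a two-case split on whether $E\in s^1$ reduces to one of two tiny, fully explicit games on three or four distinguished resources (the $\{A,D,E\}$ configuration mirroring Lemma \ref{no_resources}, and the $\{A,C,D,E\}$ configuration mirroring Example \ref{2layer}), each exhibiting a Nash equilibrium that no greedy order can produce. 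So the contradiction is either ``$NE\not\subseteq Z$'' for a concrete small game, or ``$s^4\subseteq s^3$'' against subset-freeness.

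Your proposed endgame --- engineer payoffs so that $(s^1,s^4)$ is greedy but not a Nash equilibrium --- is exactly the part you defer (``the main obstacle is the payoff design''), and it is not clear it can be carried out uniformly. The lemmas pin down very little about $s^4$ beyond $D\in s^4$, $s^4\neq s^2$, and $s^4\subseteq s^2\cup(s^3\setminus s^1)$: you do not control $s^4\cap s^1$ (which can be any subset of $s^1\cap s^2$), whether $C$ or $B$ lies in $s^4$, or how much of $s^3\setminus s^1$ it contains. Moreover, in the profile you target the second player has just best-responded, so only player 1 can have the profitable deviation, and you must simultaneously force player 1 onto $s^1$ first, force player 2 onto $s^4$ (over $s^2$ and $s^3$), and leave player 1 a strict improvement --- three constraints whose compatibility depends on the unpinned structure of $s^4$. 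The paper sidesteps all of this by never needing a game that references $s^4$'s full shape; it only needs the single extra resource $E$. As written, your proof is a plausible plan with the decisive verification missing, and the specific construction you outline is not known to exist in all cases the hypothesis allows.
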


\begin{proof}
Suppose $F$ is not tree representable and $Z(G)=NE(G)\ \  \forall G \in {\cal G}(F)$. Then, by Lemmas \ref{no_resources} and \ref{exists s4}, there exist 4 strategies $s^1,\ldots,s^4$ and 4 distinct resources, $A,B,C,D$ satisfying:
$A \in (s^1 \cap s^3) \setminus s^2$, $B \in s^2 \setminus s^1$, $C \in (s^1 \cap s^2) \setminus s^3$ and
$D \in s^3 \setminus (s^1 \cup s^2)$. We will now show that in such case $s^4 \subset s^3$, thus contradicting the subset-freeness assumption.

Suppose there exists a resource $E \in s^4 \setminus s^3$. This in turn implies that $E \in s^2 \setminus s^3$. Let us distinguish between the case where $E \in s^1$ and the case $E \not \in s^1$.

{\bf Case 1 - Assume $E \in s^1$}. This, in particular implies that:
\begin{eqnarray*}
s^1 \cap \{A,D,E\} =\{A,E\}\\
s^3 \cap \{A,D,E\} =\{A,D\}\\
s^4 \cap \{A,D,E\} =\{D,E\}
\end{eqnarray*}

Consider a 2 player game with the following payoff functions:

\begin{tabular}{|c|c|c|c|c|}
\hline
\# of players / Resource & A & D & E & other resources\\
\hline
1 & 10 & 9 & 8 & $ \frac{1}{M}$\\
\hline
2 & 1 & 6 & 7 & $\frac{1}{2M}$\\
\hline
\end{tabular}\\

Let $M$ be sufficiently large to ensure that $\frac{2|R|}{M} <1$. This game does not satisfy $Z(G)=NE(G)$, as for example the NE involving AE and DE cannot be attained by greedy behavior. Thus we reach a contradiction.

{\bf Case 2 - Assume $E \not \in s^1$}. This, in particular implies that:
\begin{eqnarray*}
s^1 \cap \{A,C,D,E\} =\{A,C\}\\
s^2 \cap \{A,C,D,E\} =\{E,C\}\\
s^3 \cap \{A,C,D,E\} =\{A,D\}\\
s^4 \cap \{A,C,D,E\} =\{E,D\}
\end{eqnarray*}

Consider the 2 player game with the following resource payment functions:

\begin{tabular}{|c|c|c|c|c|c|}
\hline
\# of players / Resource & A & E & C & D & other resources\\
\hline
1  & 40 & 30 & 20 & 15 & $\frac{1}{M}$\\
\hline
2  & 10 & 11 & 12 & 13 &  $\frac{1}{2M}$ \\
\hline
\end{tabular}\\

Let $M$ be sufficiently large to ensure that $\frac{2|R|}{M} <1$. This game is in the spirit Example \ref{2layer} (with the addition of extra resources that yield negligible utility). Similar to Example \ref{2layer} this game does not satisfy $Z(G)=NE(G)$, thus reaching a contradiction.
\end{proof}

The proof of Theorem \ref{main theorem 1} follows from Propositions \ref{TRGG_GNE} and \ref{sufficient}.

\section{Summary}

Monotone congestion games are a well proved modeling tool. In many realistic cases one can easily assume that the strategy set is subset-free (e.g., when strategies are paths leading from a source node to a target node) or that resource payment functions are single-signed (e.g., they express latency over a graph edge and are hence negative). We show that in such cases the set of pure Nash equilibria coincides with the set of greedy strategy profiles. We conclude that in such cases a Nash equilibrium forms a viable solution concept as it emerges from very weak rationality assumptions and does not hinge on common knowledge of rationality. In addition, it can be the case that the computational difficulty of finding the equilibria set in such games is substantially weaker than in an arbitrary game or even an arbitrary congestion game.

\end{document}